\algnewcommand{\LineComment}[1]{\State \(\triangleright\) #1}
\theoremstyle{definition}
\newtheorem{defn}{\protect\definitionname}
\theoremstyle{plain}
\newtheorem{prop}{\protect\propositionname}
\providecommand{\definitionname}{Definition}
\providecommand{\propositionname}{Proposition}
\begin{document}
% paper title
% can use linebreaks \\ within to get better formatting as desired
% Do not put math or special symbols in the title.
\title{t-Closeness through Microaggregation: Strict Privacy
with Enhanced Utility Preservation}

% author names and IEEE memberships
% note positions of commas and nonbreaking spaces ( ~ ) LaTeX will not break
% a structure at a ~ so this keeps an author's name from being broken across
% two lines.
% use \thanks{} to gain access to the first footnote area
% a separate \thanks must be used for each paragraph as LaTeX2e's \thanks
% was not built to handle multiple paragraphs
%

\author{Jordi Soria-Comas,
        Josep Domingo-Ferrer,~\IEEEmembership{Fellow, IEEE,}
	David S\'anchez and Sergio Mart\'{\i}nez% <-this % stops a space
\thanks{The authors are with the UNESCO Chair in Data Privacy,
Department of Computer Engineering and Mathematics,
Universitat Rovira i Virgili, Av. Pa\"{\i}sos Catalans 26,
E-43007 Tarragona, Catalonia. E-mail \{jordi.soria, josep.domingo, david.sanchez, sergio.martinezl\}@urv.cat}}%

% note the % following the last \IEEEmembership and also \thanks - 
% these prevent an unwanted space from occurring between the last author name
% and the end of the author line. i.e., if you had this:
% 
% \author{....lastname \thanks{...} \thanks{...} }
%                     ^------------^------------^----Do not want these spaces!
%
% a space would be appended to the last name and could cause every name on that
% line to be shifted left slightly. This is one of those "LaTeX things". For
% instance, "\textbf{A} \textbf{B}" will typeset as "A B" not "AB". To get
% "AB" then you have to do: "\textbf{A}\textbf{B}"
% \thanks is no different in this regard, so shield the last } of each \thanks
% that ends a line with a % and do not let a space in before the next \thanks.
% Spaces after \IEEEmembership other than the last one are OK (and needed) as
% you are supposed to have spaces between the names. For what it is worth,
% this is a minor point as most people would not even notice if the said evil
% space somehow managed to creep in.

% The paper headers
\markboth{IEEE TRANSACTIONS ON KNOWLEDGE AND DATA ENGINEERING,~Vol.~?, No.~?, Month~YYYY}%
{Jordi Soria-Comas \MakeLowercase{\textit{et al.}}: $t$-Closeness through Microaggregation}
% The only time the second header will appear is for the odd numbered pages
% after the title page when using the twoside option.
% 
% *** Note that you probably will NOT want to include the author's ***
% *** name in the headers of peer review papers.                   ***
% You can use \ifCLASSOPTIONpeerreview for conditional compilation here if
% you desire.

% If you want to put a publisher's ID mark on the page you can do it like
% this:
%\IEEEpubid{0000--0000/00\$00.00~\copyright~2012 IEEE}
% Remember, if you use this you must call \IEEEpubidadjcol in the second
% column for its text to clear the IEEEpubid mark.

% use for special paper notices
%\IEEEspecialpapernotice{(Invited Paper)}

% As a general rule, do not put math, special symbols or citations
% in the abstract or keywords.
\IEEEtitleabstractindextext{%
\begin{abstract}
Microaggregation is a technique for disclosure limitation 
aimed at protecting
%JOSEP. Users -> subjects
the privacy of data subjects in microdata releases. 
It has been used as an alternative
to generalization and suppression to generate $k$-anonymous
data sets, where the identity of each subject is hidden within
%JOSEP. crowd -> group
a group of $k$ subjects. 
%Jordi3. El revisor 1 demana els avantatges de la microaggregacio a l'abstract
%JOSEP1504. Una mica reescrit.
Unlike generalization, microaggregation perturbs the data and this
additional masking freedom allows improving data
utility in several ways, 
such as increasing data granularity, reducing the impact
of outliers and avoiding discretization of numerical data.
$k$-Anonymity, on the other side, does
not protect against attribute disclosure, which occurs if the variability
of the confidential values in a group of $k$ subjects is too small.
%JOSEP. Reescrit.
To address this issue, several refinements of $k$-anonymity have been
proposed, among which $t$-closeness stands out as providing
one of the strictest privacy guarantees. 
Existing algorithms
to generate $t$-close data sets are based on generalization
and suppression (they are extensions of $k$-anonymization algorithms
based on the same principles).
This paper proposes and shows how to use microaggregation to generate
$k$-anonymous $t$-close data sets. The advantages
of microaggregation are analyzed, and then several microaggregation 
algorithms 
for $k$-anonymous $t$-closeness are presented and empirically evaluated.
\end{abstract}
% Note that keywords are not normally used for peerreview papers.
\begin{IEEEkeywords}
Data privacy, microaggregation, k-anonymity, t-closeness
\end{IEEEkeywords}}

% make the title area
\maketitle

% For peer review papers, you can put extra information on the cover
% page as needed:
% \ifCLASSOPTIONpeerreview
% \begin{center} \bfseries EDICS Category: 3-BBND \end{center}
% \fi
%
% For peerreview papers, this IEEEtran command inserts a page break and
% creates the second title. It will be ignored for other modes.
\IEEEpeerreviewmaketitle

\section{Introduction}
\label{sec1}
% The very first letter is a 2 line initial drop letter followed
% by the rest of the first word in caps.
% 
% form to use if the first word consists of a single letter:
% \IEEEPARstart{A}{demo} file is ....
% 
% form to use if you need the single drop letter followed by
% normal text (unknown if ever used by IEEE):
% \IEEEPARstart{A}{}demo file is ....
% 
% Some journals put the first two words in caps:
% \IEEEPARstart{T}{his demo} file is ....
% 
% Here we have the typical use of a "T" for an initial drop letter
% and "HIS" in caps to complete the first word.
\IEEEPARstart{G}{enerating} an anonymized data set that is suitable for public
release is essentially a matter of finding a good equilibrium between
disclosure risk and information loss. Releasing the original data
set provides the highest utility to data users but incurs the greatest
disclosure risk for the subjects in the data set. On the contrary,
releasing random data incurs no risk of disclosure but provides
no utility. 

$k$-Anonymity~\cite{Samarati1998Protecting,Sweeney2002kAnonymity}
is the oldest among 
the so-called syntactic privacy models.  
Models in this class address the trade-off between privacy and
utility by requiring the anonymized data set to follow a specific
pattern that is known to limit the risk of disclosure. Yet, the method
to be used to generate such an anonymized data set is not specified by the
privacy model and must be selected to maximize data utility (because
satisfying the model already ensures privacy). $k$-Anonymity,
in particular, seeks to make record re-identification unfeasible by
hiding each subject within a group of $k$ subjects. To this end,
$k$-anonymity requires each record in the anonymized data set to
be indistinguishable from another $k-1$ records as far as the
quasi-identifier attributes are
 concerned (see Section~\ref{sec:background} for a classification
of attributes into identifiers, quasi-identifiers, confidential 
attributes and other attributes).
%\footnote{Attributes
%in a data set can be classified as identifier attributes (which
%uniquely identify the subject to whom each record corresponds,
%{\em e.g.} passport number), quasi-identifier attributes
%(which in conjunction may uniquely characterize a subject, 
%{\em e.g.} birthdate and birthplace together) and confidential 
%attributes (which contain sensitive information on the subject,
%{\em e.g.} medical diagnosis). Identifiers are removed
%from anonymized data sets, but quasi-identifiers may 
%allow re-identification of the subject behind a record unless
%properly masked.}.

Although $k$-anonymity protects against identity disclosure (the
subject to whom a record corresponds 
cannot be successfully re-identified with probability
greater than $1/k$), disclosure can still happen if the variability
of the confidential attribute values 
in the group of $k$ records is small. This
is known as \emph{attribute disclosure}. Several refinements of the
$k$-anonymity model have been proposed to protect against attribute
disclosure;
%JOSEP. Tret, pq es repeteix més avall a la secció de background.
% these include $p$-sensitive 
%$k$-anonymity~\cite{Truta2006pSensitive},
%$l$-diversity~\cite{Machanavajjhala2007lDiversity}, 
%$t$-closeness~\cite{Li2007t-Closeness},
%and $(n,t)$-closeness~\cite{Li2010Closeness}. 
they all seek to
guarantee at least a certain amount of variability of
the confidential attribute values
within each group of indistinguishable records. In this paper we focus
on the notion of $t$-closeness~\cite{Li2007t-Closeness}, whose
privacy guarantee is probably the strictest
among $k$-anonymity-like models. In fact,
$t$-closeness has been
%JOSEP. Afegida relació a la diff. privacy. Reforça la nostra 
%tria de t-closeness i ens evita el comentari de sempre: "com és
%que no us dediqueu a la differential privacy?"
shown in~\cite{Soria2013differential,DomingoSoria15} to be related
to the major alternative
to $k$-anonymity-like models, namely $\varepsilon$-differential 
privacy~\cite{Dwork06}.
$t$-Closeness requires that the distribution
of the confidential attribute values 
within each group of indistinguishable records be
similar to the distribution of the confidential 
attribute values in the entire data set.

The dominant approach to obtain an anonymized
data set satisfying $k$-anonymity or any of its refinements  
is based on generalization (recoding) and suppression. The goal of 
generalization-based approaches is to find the minimal generalization
that satisfies the requirements of the underlying privacy model. 
%JOSEP. Tret d'aquí pq es repeteix a la secció II.A
%For $k$-anonymity, a generalization based algorithm was presented in
%the original proposal~\cite{Samarati1998Protecting} and several
%improvements were been subsequently presented: the Incognito algorithm~\cite{LeFevre2005Incognito}
%(which improves the efficiency of the original algorithm) and the
%Mondrian algorithm~\cite{LeFevre2006Mondrian} (which is based on
%multidimensional global recoding instead of single-dimensional global
%recoding). 
These algorithms can be adapted to the above-mentioned 
$k$-anonymity refinements: it is simply a matter of introducing
the additional constraints of the target privacy model when checking
whether a specific generalization is viable.

Generalization-based approaches suffer from 
some drawbacks identified in~\cite{Domingo2005Ordinal} and reviewed
in Section~\ref{sec:comparison} below.
Microaggregation was shown in~\cite{Domingo2005Ordinal} 
to be an alternative approach to 
generate $k$-anonymous
data sets while avoiding some of these drawbacks. 
%
%Microaggregation
%has been used to attain $k$-anonymity in~\cite{Domingo2005Ordinal},
%and to attain $p$-sensitive $k$-anonymity in~\cite{Solanas2008microaggregation}.
%David a Jordi: aqui posaria explicitament les contribucions de l'article (com es comenten a l'abstract)
%Analisi de diverses tecniques per aconseguir k-anonimitat i t-closeness
%Comparació i discussió dels avantatges de la microgregación en comparació amb la generalitzacio que s'utilitza habitualmente
%Proposta de 3 algorismes ben diferenciats per aconseguir t-closeness vi k-anonymous microaggregation
%Avaluació empirica amb un dataset standard i discussio dels resultats (basicament, discussió del rendiment dels diferents algorismes)
%Jordi. Canviat a sota d'acord amb el comentari anterior.
%In this paper we propose the use of microaggregation to attain $t$-closeness.

\subsection*{Contribution and plan of this paper}

%JOSEP1504. Petits canvis a sota.
A first contribution of this paper is to identify the 
strong points of   
microaggregation to achieve $k$-anonymous $t$-closeness.
The second contribution consists of 
three new microaggregation-based algorithms for $t$-closeness, which
are presented and evaluated.
 
In Section~\ref{sec:background}
%JOSEP. Canviat. He creat més seccions de la part no background,
%per destacar que l'article conté moltes contribucions originals.
we review some concepts used throughout the paper: 
$k$-anonymity, $t$-closeness, recoding/generalization and
microaggregation. In Section~\ref{sec:comparison}, we
identify the advantages of microaggregation
over generalization/suppression for $k$-anonymity and hence
for $t$-closeness as well; then we 
sketch three microaggregation-based algorithms for $t$-closeness
that are detailed in the next sections.
Section~\ref{sub:micro_merge} presents an algorithm for 
$t$-closeness based on standard microaggregation followed
by cluster merging. Section~\ref{sub:k-anonymity-first}
presents an algorithm that embeds $t$-closeness into the microaggregation
process: each cluster is generated to satisfy $k$-anonymity and
then it is refined to achieve $t$-closeness.
Section~\ref{sub:t-closeness-first} also embeds $t$-closeness
in the microaggregation process, but in this case each cluster
is generated to satisfy simultaneous $k$-anonymity and $t$-closeness
from the very beginning. 
In Section~\ref{sec:empirical} we evaluate the 
previously proposed algorithms on real data sets. Conclusions
are gathered in Section~\ref{sec:conclusions}.

\section{Background}
\label{sec:background}

A microdata set can be modeled as a table where each row contains
data on a different subject and each column contains information
about a specific attribute. Let $T(A_{1},\ldots,A_{m})$ be a microdata
set with $n$ records $r_{1},\ldots,r_{n}$, each of them with information
about attributes $A_{1},\ldots,A_{m}$.

%JOSEP. Canviat.
The attributes in a microdata set can be classified according to their
disclosiveness into several (perhaps non-disjoint) classes
(see~\cite{Hundepool2012sdc}~for more details on the following
classification):
%Jordi3. 2 revisors demanen reduir el background
%\begin{itemize} 
%\item Identifiers, which
%uniquely identify the subject to whom each record corresponds,
%{\em e.g.} passport number;
%\item Quasi-identifiers,
%which are attributes that in conjunction may uniquely characterize a subject, 
%{\em e.g.} birthdate and birthplace together;
%\item Confidential attributes, which carry sensitive information 
%on the subject, {\em e.g.} medical diagnosis);
%\item Other attributes, which do not carry
%any sensitive information and are not useful for subject identification,
%{\em e.g.} attributes whose value is a function of other attribute values.
%\end{itemize} 
identifiers, quasi-identifiers, confidential attributes, and non-confidential attributes.

%Jordi3. 2 revisors demanen reduir el background
%Identifiers are of course removed from the original data set before
%anonymization,
%but quasi-identifiers may still 
%allow re-identification of the subject corresponding to a record if 
%they are released unaltered 
%(re-identification
%can be attempted by linking the anonymized
%data set to an external data set sharing 
%(some of) the quasi-identifiers and having also identifiers).
%Hence, without loss of generality, we can assume that our 
%microdata set $T(A_{1},\ldots,A_{n})$ 
%contains only quasi-identifiers and confidential attributes.

Disclosure risk limitation (a.k.a. statistical disclosure control) 
seeks to restrict the capability 
of an intruder
with access to the released data set to
associate a piece of confidential
information to a specific subject in the data set.  
To this end, a masked version $T'(A_{1},\ldots,A_{n})$
of the original data set $T(A_{1},\ldots,A_{n})$ is released. We use
the term \emph{anonymized data set} to refer to $T'(A_{1},\ldots,A_{n})$. 

\subsection{$k$-Anonymity}

An intruder re-identifies a record in an anonymized data set when
he can determine the identity of the subject
to whom the record corresponds.
In case of re-identification, the intruder can 
associate the values of the confidential attributes
in the re-identified record to the 
identity of the subject, thereby violating the subject's privacy. 
%Jordi3. 2 revisors demanen reduir el background
%As mentioned above, quasi-identifiers play a key role in 
%re-identification: the intruder uses them as keys to attempt
%linking anonymized records with records in some external
%data set containing identifiers on a set of subjects
%which overlaps with the set of subjects
%in the anonymized data set, in what is known as a 
%{\em record linkage attack}.

$k$-Anonymity~\cite{Samarati1998Protecting,Sweeney2002kAnonymity}
seeks to limit the capability of the intruder to perform successful
re-identifications. 
%Jordi3. 2 revisors demanen reduir el background
%The principle of this privacy model is to require any
%combination of quasi-identifier values in the released data set to
%be shared by at least $k$ records. We use the terms \emph{$k$-anonymous
%	group} or \emph{equivalence class} to refer to a set of records that
%share the quasi-identifier values.

\begin{defn}[$k$-anonymity]
	Let $T$ be a data set and $QI_{T}$ be the 
set of quasi-identifier attributes in it. 
$T$ is said to satisfy $k$-anonymity if, for
%JOSEP. Canviada definició. Trobo que no era clara. Què vol dir que hi 
%hagi k ocurrències de QI per a tot QI \in QI_T? És qualsevol *combinació*
% de valors de tots els QI's de QI_T que ha de tenir k ocurrències.
%	each quasi-identifier $QI\in QI_{T}$ each value in $T[QI]$, the
%	projection of $T$ over $QI$, appears at least with $k$ occurrences
%	in $T[QI]$.
%Jordi2. Aquesta era la definicio tal i com apareix al paper de la P. Samarati.
%   Estic d'acord en que la nova definició és més clara.
each combination of values of the quasi-identifiers in $QI_T$, at least
$k$ records in $T$ share that combination.
\end{defn}
In a $k$-anonymous data set, no subject's identity 
can be linked (based on the quasi-identifiers) 
to less than $k$ records. Hence, the probability of correct
re-identification is, at most, $1/k$.
%JOSEP1504. Afegida definició d'equivalence class perquè es 
%fa servir després
In what follows, 
we use the terms \emph{$k$-anonymous
       group} or \emph{equivalence class} to refer to a set of records that
share the quasi-identifier values.

%Jordi3. 2 revisors demanen reduir el background
%Several computational methods have been proposed to generate $k$-anonymous data
%sets. Most of them rely on the generalization (a.k.a. recoding) and 
%suppression principles, including the method that 
%was presented in the original $k$-anonymity 
%paper~\cite{Samarati1998Protecting}. 
%Several improvements of this first method based
%on the same principles appeared in the sequel, 
%including the Incognito algorithm~\cite{LeFevre2005Incognito}
%(which improves the efficiency of the original algorithm) and the
%Mondrian algorithm~\cite{LeFevre2006Mondrian} (which is based on
%multidimensional global recoding instead of single-dimensional global
%recoding). 
%Essentially, the goal of this class of
%algorithms is to find a minimal generalization for the attributes
%that satisfies the $k$-anonymity condition.
%
%In~\cite{Domingo2005Ordinal}, a different approach to attain $k$-anonymity 
%was proposed that is based on microaggregation of the projection
%of the original data set on the quasi-identifier attributes. 
%Background on microaggregation is given in Section~\ref{micro} below.
%In~\cite{Domingo2005Ordinal}
%the use of the MDAV microaggregation algorithm was evaluated 
%for generating $k$-anonymous data sets.

\subsection{$t$-Closeness}

Even though $k$-anonymity protects against identity disclosure, 
it is a well-known fact that $k$-anonymous data sets are vulnerable
to attribute disclosure. Attribute disclosure occurs when the variability
of a confidential attribute within an equivalence class is too low.
In that case, being able to determine the equivalence class of a subject 
may reveal too much information about the
confidential attribute value of that subject.

Several refinements of $k$-anonymity have been proposed to deal
with attribute disclosure. For example, $p$-sensitive $k$-anonymity~\cite{Truta2006pSensitive},
$l$-diversity~\cite{Machanavajjhala2007lDiversity}, $t$-closeness~\cite{Li2007t-Closeness},
and $(n,t)$-closeness~\cite{Li2010Closeness}. As explained
in Section~\ref{sec1}, in this paper we
focus on $t$-closeness because of its strict privacy guarantee 
(although the methods we propose are easily
adaptable to $(n,t)$-closeness). 

%JOSEP1504. Petit canvi de redactat.
$t$-Closeness seeks to limit the amount of information that
an intruder can obtain about the
confidential attribute of any specific subject. 
To this end, $t$-closeness requires 
the distribution of the confidential
attributes within each of the equivalence classes 
to be similar to their distribution
in the entire data set. 
%Jordi3. 2 revisors demanen reduir el background
%The justification for $t$-closeness is based
%on the assumption that the distribution of the confidential
%attributes in the entire data set is not disclosive as it is not related
%to any specific subject. The knowledge gain that a $t$-close data
%set provides is measured by the distance between the distribution
%of the confidential attributes in the entire data set 
%and their distribution within each equivalence class.
\begin{defn}
	An equivalence class is said to satisfy $t$-closeness if the distance
	between the distribution of the confidential attribute in this class
	and the distribution of the attribute in the whole data set is no more
	than a threshold $t$. A data set (usually 
a $k$-anonymous data set) is said to satisfy $t$-closeness if all
	equivalence classes in it satisfy $t$-closeness.
\end{defn}

The specific distance used between distributions is central to evaluate 
$t$-closeness, but the original definition does not advocate any specific
distance. The Earth Mover's distance (EMD)~\cite{Rubner2000Earth} 
is the most common choice (and the one we will adopt in this paper),
although other distances have also been 
explored~\cite{Rebollo,Soria2013differential,DomingoSoria15}.
%Jordi3. Afegit descripcio EMD per peticio del revisor 2
%JOSEP1504. Totalment reescrita, pq la definició contenia errors: P(v_j) Q(v_j)
% en comptes de P(v_i) Q(v_j), reutilització de m que ja s'havia
%fer servir, no es feia servir l'ordered_distance a la fórmula, etc.
$EMD(P,Q)$ measures the cost of transforming one distribution $P$ 
into another distribution $Q$ by moving probability mass.
EMD is computed as the minimum transportation
cost from the bins of $P$ to the bins of $Q$, so it 
depends on how much mass is moved and how far it is moved.
%Jordi4. The revisor was specifically asking for the formula
% used to computed the EMD for numerical attributes. So, although
% this formula is not as intuitive as the previous one, we think
% that is better to stick to it. An advantage of this formula is
% that it shows how the distance is effectively computed.
For numerical attributes the distance between two bins is
based on the number of bins between them. If the numerical
attribute takes values $\{v_1,v_2,\ldots,v_m\}$, where $v_i<v_j$
if $i<j$, then $ordered\_distance(v_i,v_j)=|i-j|/(m-1)$. 
Now, if $P$ and $Q$ are distributions over $\{v_1,v_2,\ldots,v_m\}$
that, respectively, assign probability $p_i$ and $q_i$ to $v_i$,
then the EMD for the ordered distance can be computed as
\[EMD(P,Q) = \frac{1}{m-1} \sum_{i=1}^{m} \left| \sum_{j=1}^{i} p_j-q_j \right|\]

\subsection{Microaggregation}
\label{micro}

Microaggregation is a family of perturbative methods
for statistical disclosure control of microdata releases. 
One-dimensional microaggregation was
introduced in~\cite{Defays1992Aggregates} 
and multi-dimensional microaggregation was proposed
and formalized in~\cite{Domingo02}.
The latter is the one that is useful
for $k$-anonymity and $t$-closeness.
It consists
of the following two steps:
\begin{itemize}
	\item {\em Partition:} The records in the original data set are partitioned
	into several clusters, each of them containing at least $k$ records.
	To minimize the information loss, records in each cluster should be
	as similar as possible.
	\item {\em Aggregation:} An aggregation operator is used to summarize the
	data in each cluster 
	and the original records are replaced by the aggregated output.
 	For numerical data, one can use the mean as aggregation operator;
%JOSEP. Afegit.
	for categorical data, one can resort to the median or some 
	other average operator defined in terms of an ontology ({\em e.g.} 
see~\cite{Domingo13}).
\end{itemize}
The partition and aggregation steps produce some information loss.
The goal of microaggregation is to minimize the information loss according
to some metric. A common information loss metric is the SSE (sum of
squared errors). 
When using SSE on 
numerical attributes, the mean is a sensible choice
as the aggregation operator, because for any given partition 
it minimizes SSE in the aggregation step;
the challenge thus is to come up with a partition that minimizes the
overall SSE.
Finding an optimal partition in multi-dimensional
microaggregation is an NP-hard problem~\cite{Oganian01Complexity};
therefore, heuristics are employed to obtain an approximation with
reasonable cost.

%Jordi3. 2 revisors demanen reduir el background. Ho he mogut at related work
%In $k$-anonymity the relation between the quasi-identifiers and the
%confidential data is broken by making records in the anonymized data
%set indistinguishable in terms of quasi-identifiers within a group
%of $k$ records. Microaggregation, when performed on the 
%projection on quasi-identifier
%attributes, produces a $k$-anonymous data set~\cite{Domingo2005Ordinal}.
%Microaggregation was also used for $k$-anonymity without naming it 
%in~\cite{Li2008LocalRecoding}: clustering was used with the 
%additional requirement that each cluster must have $k$ or more 
%records.

%Jordi3. 2 revisors demanen reduir el background
%JOSEP1504. Reescrit.
The limitations to re-identification imposed by $k$-anonymity can
be satisfied without aggregating the values of the quasi-identifier
attributes within each equivalence class after the partition step. 
It is less utility-damaging to break the relation 
between quasi-identifiers and confidential attributes 
while preserving the original values of the quasi-identifiers.
This is the approach to attain $k$-anonymity-like guarantees 
taken in~\cite{Xiao2006Anatomy,Soria2012Probabilistic}.
%Jordi3. Tret per ajustarme a les 14 pagines.
%:
%%JOSEP. Reescrit.
%the values of the quasi-identifiers are preserved but, for 
%each combination of quasi-identifiers, there are
%$k$ possible combinations of confidential attribute
%values.
%Jordi3. Background escurçat
%A way to achieve this is to 
%split the data set into two tables: one table
%contains the quasi-identifiers and a group identifier 
%and the other table contains the confidential attributes 
%and a group identifier. In both tables, the group identifier identifies
%the equivalence class; for each equivalence class there
%are $k$ rows in the quasi-identifiers table and $k$
%rows in the confidential attributes table, but the correspondence
%between the first $k$ rows and the second $k$ rows is withheld.
%Instead of using the two-table arrangement,
%an alternative to obtain the same result is 
%to randomly permute the confidential attribute values within
%the records in each equivalence class.
%Microaggregation
%can also be applied in this setting: the partition step is performed
%on the projection on
%the quasi-identifiers and the aggregation step is replaced 
%by a permutation of the confidential attribute
%values within the records of each group in the partition.

%Jordi3. Seccio nova.

\section{Related Work}
\label{sec:related_work}
Same as for $k$-anonymity, the most common way to attain $t$-closeness
is to use generalization and suppression. In fact, the algorithms
for $k$-anonymity based on those principles 
can be adapted to yield $t$-closeness by adding the
$t$-closeness constraint in the search for a feasible
minimal generalization: in~\cite{Li2007t-Closeness} the Incognito
algorithm and in~\cite{Li2010Closeness} the Mondrian algorithm are
respectively adapted to $t$-closeness. 
%Jordi3. Referencia a SABRE afegida a peticio d'un revisor
SABRE~\cite{Cao} is another interesting approach specifically designed 
for $t$-closeness. In SABRE the data set is first partitioned into 
a set of buckets and 
then the equivalence classes are generated 
by taking an appropriate number of records 
from each of the buckets. Both the buckets and the number of records from
each bucket that are included in each equivalence class 
are selected with $t$-closeness
in mind. One of the algorithms proposed in our paper uses a similar
principle. However,
the buckets in SABRE are generated in an iterative greedy manner
%JOSEP1504. Reescrit.
which may yield more buckets than our algorithm (which 
analytically determines
the minimal number of required buckets). 
A greater number of buckets
leads to equivalence classes with more records and, thus, to more information loss.

In~\cite{Rebollo} an approach to attain $t$-closeness-like privacy is
 proposed
which, unlike the methods based on generalization/suppression, is perturbative. 
Also, \cite{Rebollo} guarantees the threshold $t$ only on 
average and uses a distance 
other than EMD. Another computational approach to $t$-closeness 
is presented 
in~\cite{DomingoSoria15,Soria2013differential} which aims at 
connecting $t$-closeness and
differential privacy; \cite{DomingoSoria15,Soria2013differential} also
use a distance different from EMD but their method
is non-perturbative (the truthfulness of the data is preserved).

Most of the approaches to attain $t$-closeness have been designed to
preserve the truthfulness of the data. In this paper we evaluate the use of
microaggregation, a perturbative masking technique. 
In $k$-anonymity the relation between the quasi-identifiers and the
confidential data is broken by making records in the anonymized data
set indistinguishable in terms of quasi-identifiers within a group
of $k$ records. Microaggregation, when performed on the 
projection on quasi-identifier
attributes, produces a $k$-anonymous data set~\cite{Domingo2005Ordinal}.
Microaggregation was also used for $k$-anonymity without naming it 
in~\cite{Li2008LocalRecoding}: clustering was used with the 
additional requirement that each cluster must have $k$ or more 
records.

While microaggregation has been proposed to satisfy
another refinement of $k$-anonymity 
($p$-sensitive $k$-anonymity, \cite{Solanas2008microaggregation}),
no attempt has been made to use it for $t$-closeness.

%JOSEP. Canviat títol
\section{$k$-Anonymity/$t$-closeness and microaggregation}
\label{sec:comparison}

%JOSEP. Jo crec que aquest paràgraf embolica la troca, si el que volem
% fer és dir que preferim fer servir microagregació.
%From~\cite{Li2008LocalRecoding} it turns out that 
%most methods for $k$-anonymity can be construed as forms 
%of microaggregation, where the clusters
%are generated through division~\cite{Samarati1998Protecting,LeFevre2005Incognito,LeFevre2006Mondrian}
%or agglomeration~\cite{Domingo2005Ordinal}. 
%Conversely, 
%microaggregation methods can be viewed as recoding methods
%Microaggregation amounts to 
%multi-dimensional global recoding when overlapping clusters
%are not allowed, and to local recoding when overlapping clusters are
%allowed. The main difference between recoding and microaggregation
%methods for $k$-anonymity is related to the representation used for
%the output data: ranges of vales are used by recoding methods and
%the cluster centroid is used by microaggregation methods.

%David: afegits els avantatges de la microagregacio

Microaggregation has several 
advantages over generalization/recoding for $k$-anonymity that are 
mostly related to data utility preservation:
\begin{itemize}
%JOSEP. Afegit.
\item Global recoding may recode some records that
do not need it, hence causing extra information loss.
On the other hand, local recoding makes data analysis more
complex, as values corresponding to various different
levels of generalization may co-exist in the anonymized data.
Microaggregation is free from either drawback.
\item Data generalization usually results in a significant 
loss of granularity, because input values can only be replaced 
by a reduced set of generalizations, which 
are more constrained as one moves up in the hierarchy. 
Microaggregation, on the other hand,
does not reduce the granularity of values, because they are 
replaced by numerical or categorical averages.
\item If outliers are present in the input data, 
the need to generalize them
results in very coarse generalizations and, thus, in a high loss of information.
For microaggregation, the influence of an outlier in the 
calculation of averages/centroids
is restricted to the outlier's equivalence class and hence is less noticeable.
\item For numerical attributes, generalization discretizes 
input numbers to numerical ranges and thereby changes 
the nature of data from continuous to discrete.
In contrast, microaggregation 
maintains the continuous nature of numbers.
\end{itemize} 

%JOSEP. Afegit sobre supressió.
In~\cite{Samarati1998Protecting,Sweeney2002kAnonymity} it was proposed
to combine local suppression with recoding to reduce
the amount of recoding. Local
suppression has several drawbacks:
\begin{itemize}
\item It is not known how to optimally combine generalization and 
local suppression.
\item There is no agreement in the literature on how suppression
should be performed:
one can suppress at the record level (entire record suppressed),
or suppress particular attributes in some records; furthermore, 
suppression can be done by either blanking a value or replacing it 
by a neutral value ({\em i.e.} some kind of average).
\item Last but not least, and no matter how suppression is performed,
it complicates data analysis (users need to resort to software
dealing with censored data).
\end{itemize}  

Some of the above downsides of generalization and suppression motivated
proposing microaggregation for $k$-anonymity 
in~\cite{Domingo2005Ordinal}. They also justify that we   
investigate here the use of microaggregation 
for $t$-closeness.

%JOSEP. Ajuntada secció amb la prèvia.
%\section{$t$-Closeness through microaggregation}
%\label{sec:algorithms}

%JOSEP. Paràgraf sobrer.
%We have seen that microaggregation (and in general other clustering
%methods) have been adapted to attain $k$-anonymity. In this section
%we aim at adapting them for $t$-closeness.

The adaptation of microaggregation  for $k$-anonymity
was pretty straightforward: by applying the microaggregation algorithm
(with minimum cluster size $k$) to the quasi-identifiers one generates
groups of $k$ records that share the quasi-identifier values (the
aggregation step replaces the original quasi-identifiers by the cluster
centroid). In microaggregation one seeks to maximize the homogeneity
of records within a cluster, which is beneficial for the utility of
the resultant $k$-anonymous data set. 

In $t$-closeness one has   
the additional constraint that the distance between 
the distribution of the confidential attribute within each of the clusters 
(generated by microaggregation)
and the distribution in the entire data set must be less than $t$.
This makes attaining $t$-closeness more complex, because we have
to reconcile the possibly conflicting goals of maximizing the 
within-cluster homogeneity of the quasi-identifiers and 
fulfilling the condition on the distance between the distributions
of the confidential attributes.
%guaranteeing that
%the distance of the confidential attribute between each cluster and
%the overall data set is, at most, $t$.

In the next three sections, we propose three different algorithms 
to reconcile these conflicting goals.
The first algorithm is based on performing microaggregation in 
the usual way, 
and then merging clusters as much as needed 
to satisfy the $t$-closeness condition. This first algorithm
is simple and it can be combined with any microaggregation algorithm,
yet it may perform poorly regarding utility because clusters 
may end up being quite large.
The other algorithms modify the
microaggregation algorithm for it to take $t$-closeness into account,
in an attempt to improve the utility of the anonymized data set. Two
variants are proposed: $k$-anonymity-first (which generates each cluster
based on the quasi-identifiers and then refines it to satisfy $t$-closeness)
and $t$-closeness-first (which generates each cluster 
based on both quasi-identifier
attributes and confidential attributes, so that it satisfies $t$-closeness
by design from the very beginning). 

%JOSEP. Ascendit a secció
\section{Standard microaggregation and merging\label{sub:micro_merge}}

Generating a $t$-close data set via generalization is essentially
an optimization problem: one must find a minimal generalization 
that satisfies $t$-closeness.
A common way to find a solution is to iteratively
generalize one of the attributes (selected according to some quality
criterion) until the resulting data set satisfies $t$-closeness. Our 
first proposal to attain $t$-closeness via microaggregation follows
a similar approach. We microaggregate and then 
merge clusters of records in the microaggregated data set;
%JOSEP. Crec que confon una mica continuar la comparació amb la 
%generalització.
% In this case microaggregation and the merge of
%groups of microaggregated records are the specific notions of generalization
%that we use, 
we use the distance between the quasi-identifiers of the
microaggregated clusters as the quality criterion
to select which groups are to be merged.  

%JOSEP. Reescrit una mica
Initially, the microaggregation algorithm is run on the quasi-identifier
attributes of the original data set; this step produces a $k$-anonymous
data set. Then, clusters of microaggregated records are merged until
$t$-closeness is satisfied. We iteratively improve the level of $t$-closeness
by: i) selecting the cluster whose confidential attribute 
distribution is most different from the confidential
attribute distribution in the entire data set (that is,
the cluster farthest from 
satisfying $t$-closeness); and ii) merging it with the cluster 
closest to it in terms of quasi-identifiers. 
See Algorithm~\ref{alg:micro_merge}
for a detailed description of the algorithm.

\begin{algorithm}
	\protect\caption{\label{alg:micro_merge}$t$-Closeness through microaggregation and
		merging of microaggregated groups of records.}
		
	\begin{algorithmic}[0]
		\State {\bf Data:} $X$: original data set
		\State \hspace{1.11cm}$k$: minimum cluster size
		\State \hspace{1.11cm}$t$: t-closeness level
		\State {\bf Result} Set of clusters satisfying $k$-anonymity and $t$-closeness
		\vspace{0.3cm}
		\State $X'$=microaggregation($X$, $k$)\;
		\While{$EMD(X',X)>t$}
			\State$C$ = cluster in $X'$ with the greatest $EMD$ to $X$
			\State$C'$ = cluster in $X'$ closest to $C$ in terms of QIs
			\State$X'$ = merge $C$ and $C'$ in $X'$
		\EndWhile\label{f}
		\State {\bf return} $X'$\;
	\end{algorithmic}

\end{algorithm}

Note that Algorithm~\ref{alg:micro_merge} always returns a $t$-close
data set. In the worst case, all clusters are eventually 
merged into a single one and the EMD becomes zero.

%Jordi3. Afegit cost computacional
The computational cost of Algorithm~\ref{alg:micro_merge} is the sum of the cost of the initial
microaggregation and the cost of merging clusters. 
Although optimal multivariate microaggregation is NP-hard, 
several heuristic approximations exist with quadratic cost 
on the number $n$ of records
of $X$ (e.g. MDAV~\cite{Domingo2005Ordinal}, V-MDAV~\cite{Solanas2006Vmdav}). For the merging
part, the fact that computing the EMD for numerical data has linear cost turns the merging 
quadratic. 
%JOSEP1504. IMPORTANT. Dius que el merging és quadràtic, però després
% comptes un factor n^3/k ?? He posat n^2/k
More precisely, the cost of Algorithm~\ref{alg:micro_merge} is 
$\max\{\mathcal{O}(microaggregation),n^2/k\}$. If MDAV is used for the microaggregation, the 
cost is $\mathcal{O}(n^2/k)$.

%JOSEP. Ascendit a secció
\section{$t$-Closeness aware microaggregation: $k$-anonymity-first\label{sub:k-anonymity-first}}

Algorithm~\ref{alg:micro_merge} consists of two clearly defined steps: first
microaggregate and then merge clusters until $t$-closeness
is satisfied. In the microaggregation step any standard microaggregation
algorithm can be used because the enforcement of $t$-closeness
takes place only after microaggregation is complete. As a result,
the algorithm is quite clear, but the utility of the anonymized data
set may be far from optimal. If, instead of deferring the enforcement of
$t$-closeness to the second step, we make the microaggregation algorithm
aware of the $t$-closeness constraints at the time 
of cluster formation,
the size of the resulting clusters and also information loss 
can be expected to be smaller.

Algorithm~\ref{alg:k-anonymity-first} microaggregates according
to the above idea.
It initially generates a cluster of size $k$ based on the quasi-identifier
attributes. Then the cluster is iteratively refined until $t$-closeness
is satisfied. In the refinement, the algorithm checks whether 
$t$-closeness is satisfied
and, if it is not, it selects the closest record not in the cluster based
on the quasi-identifiers and swaps it with a record in the cluster
selected
so that the EMD to the distribution of the entire data set is minimized. 

\begin{algorithm}
	\protect\caption{\label{alg:k-anonymity-first}$k$-Anonymity-first $t$-closeness aware microaggregation algorithm.}
	
	\begin{algorithmic}[0]
		\Function{$k$-Anonymity-first}{}
			\State {\bf Data:} $X$: original data set
			\State \hspace{1.11cm}$k$: minimum cluster size
			\State \hspace{1.11cm}$t$: t-closeness level
			\State {\bf Result} Set of clusters satisfying $k$-anonymity and $t$-closeness
			\vspace{0.3cm}
			\State $Clusters = \emptyset$
			\State $X' = X$
			\While{$X'\ne \emptyset$}
				\State$x_a$ = average record of $X'$
				
				\State$x_0$ = most distant record from $x_a$ in $X'$
				\State$C$ = GenerateCluster($x_0$, $X'$, $X$, $k$, $t$)
				\State$X'=X'\setminus C$
				\State$Clusters = Clusters \cup \{C\}$
				
				\If{$X'\ne \emptyset$}
					\State$x_1$ = most distant record from $x_0$ in $X'$
					\State$C$ = GenerateCluster($x_1$, $X'$, $X$, $k$, $t$)
					\State$X'=X'\setminus C$
					\State$Clusters = Clusters \cup \{C\}$
				\EndIf
			\EndWhile
		\State {\bf return} $Clusters$
		\EndFunction
		
		\Function{GenerateCluster($x$, $X'$, $X$, $k$, $t$)}{}
			\State {\bf Data:} $x$: source record for the cluster
			\State \hspace{1.11cm}	$X'$: remaining unclustered records of $X$
			\State \hspace{1.11cm}	$X$: original data set
			\State \hspace{1.11cm}	$k$: minimum cluster size
			\State \hspace{1.11cm}	$t$: desired $t$-closeness level
			\State {\bf Result} $t$-close cluster of $k$ (or more) records
			
			\If{$|X'|<2k$}
				\State $C = X'$
			\Else
				\State $C$ = $k$ closest records to $x$ in $X'$ (including $x$ itself)
				\State $X'=X'\setminus C$
				\While{$EMD(C,X)>t$ and $X'\ne\emptyset$}
					\State $y$ = record in $X'$ that is closest to $x$
					\State $y'$ = record $C$ that minimizes $EMD(C\cup\{y\}\setminus\{y'\},X)$
					\If{$EMD(C\cup\{y\}\setminus\{y'\},X) < EMD(C,X)$}
						\State $C$=$C\cup\{y\}\setminus\{y'\}$
					\EndIf
					\State $X'=X'\setminus\{y\}$
				\EndWhile
			\EndIf
			\State {\bf return} $C$
		\EndFunction
		
	\end{algorithmic}

\end{algorithm}

%Jordi. Afegit
Instead of replacing the records already added to a cluster, we could have opted for
adding additional records until $t$-closeness is satisfied. This latter approach was 
discarded because it led to large clusters when the dependence between 
quasi-identifiers
and confidential attributes is high.
%JOSEP. Afegida explicació.
In this case,
clusters homogeneous in terms of quasi-identifiers tend
to be homogeneous in terms of confidential attributes, 
so the within-cluster distribution of the confidential attribute
differs from its distribution in the entire data set unless
the cluster is (nearly) as big as the entire data set. 

It may happen that the records in 
the data set are exhausted before $t$-closeness
is satisfied. This is most likely when the number of remaining
unclustered records is small (for instance, when the last cluster
is formed). 
%JOSEP. Afegit això i reescrit.
Thus, {\em Algorithm~\ref{alg:k-anonymity-first} alone
cannot guarantee that $t$-closeness is satisfied.
A way to circumvent this shortcoming is 
to use Algorithm~\ref{alg:k-anonymity-first}
as the microaggregation function in Algorithm~\ref{alg:micro_merge}.}
By taking into account $t$-closeness at the time 
of cluster formation (as Algorithm~\ref{alg:k-anonymity-first} does),
the number of cluster mergers in Algorithm~\ref{alg:micro_merge} 
can be expected to be small and, therefore, the utility 
of the resulting anonymized data set can be expected to be reasonably
good.

%Jordi3. Afegit cost computacional
Algorithm~\ref{alg:k-anonymity-first} makes an intensive use of the EMD distance.
Due to this and to the cost of computing EMD, 
Algorithm~\ref{alg:k-anonymity-first}
may be rather slow. More precisely, it has order $\mathcal{O}(n^3/k)$ in 
the worst case, and order
$\mathcal{O}(n^2/k)$ in the best case (when no record swaps are required).

\section{$t$-Closeness aware microaggregation: $t$-closeness-first\label{sub:t-closeness-first}}

In Section~\ref{sub:k-anonymity-first} we modified the microaggregation
algorithm for it to build the clusters in a $t$-closeness aware manner.
The clustering algorithm, however, kept the focus on the quasi-identifiers
(records were selected based on the quasi-identifiers) and did not
guarantee that every cluster satisfies $t$-closeness. The algorithm 
proposed in this section prioritizes the confidential
attribute, thereby making it possible to guarantee that all clusters
satisfy $t$-closeness.

%JOSEP. IMPORTANT. Afegit això
%Jordi2. Tens raó en que el que fem es pot aplicar sempre que hi hagi un ordre. 
%        S'especificaven dades numèriques perquè en el paper de t-closeness proposen
%        variacions de la EMD per dades categòriques. 
We assume in this section that the values of the confidential attribute(s)
can be ranked, that is, be ordered in some
way. For numerical or categorical ordinal attributes, 
ranking is straightforward. Even for categorical nominal
attributes, the ranking assumption is less restrictive
than it appears, because
the same distance metrics that are used to microaggregate
this type of attributes  
can be used to rank them ({\em e.g.} 
the marginality distance in~\cite{Domingo13,SoriaVLDB14}).

We start by evaluating some of the properties of the EMD distance with respect
to microaggregation. To minimize EMD between 
the distributions of the confidential attribute within a cluster 
and in the entire data set,
the values of the confidential attribute in the cluster must be as
spread as possible over the entire data set. Consider the case of
a cluster with $k$ records. The following proposition gives 
%JOSEP. Canviat
%the minimal value of
a lower bound of EMD for such a cluster.
\begin{prop}
	\label{prop:min}Let $T$ be a data set with $n$ records, $A$
%JOSEP. No cal que A sigui numèric n'hi ha prou que es puguin ordenar els seus
%valors.
	be a confidential attribute of $T$ whose values can be ranked 
and $C$ be a cluster of size $k$.
	The earth mover's distance between $C$ and $T$ 
with respect to  
	attribute $A$ satisfies 
$EMD_{A}(C,T)\ge(n+k)(n-k)/(4n(n-1)k)$.
%JOSEP. Afegit això.
 If $k$ divides $n$, this lower bound is tight.\end{prop}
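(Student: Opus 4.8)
The plan is to work directly with the closed-form expression for the earth mover's distance under the ordered distance quoted above, specialize it to a cluster $C\subseteq T$ of $k$ records, and minimize over all such clusters. I would take $T$ to carry $n$ distinct ranked values, so that the confidential attribute has $m=n$ bins, one per record; this is consistent with the normalizing factor $n-1$ appearing in the bound, and it is the configuration in which the values of $C$ can be spread most freely, hence the one making $EMD_A(C,T)$ smallest. Writing $q_j=1/n$ for the whole-data distribution and $p_j=\mathbf{1}[j\in C]/k$ for the cluster, the quoted formula collapses to
\[
EMD_A(C,T)=\frac{1}{n-1}\sum_{i=1}^{n}\left|\frac{c_i}{k}-\frac{i}{n}\right|=\frac{1}{k(n-1)}\sum_{i=1}^{n}\left|\,c_i-\frac{ik}{n}\,\right|,
\]
where $c_i=\#\{\text{records of }C\text{ of rank}\le i\}$ is a nondecreasing step function from $0$ to $k$ with increments in $\{0,1\}$. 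Minimizing $EMD_A(C,T)$ thus reduces to choosing the integer sequence $(c_i)$ that best tracks the line $ik/n$.

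The lower bound then comes from a term-by-term rounding argument. Since each $c_i\in\mathbb{Z}$, we have $|c_i-ik/n|\ge\operatorname{dist}(ik/n,\mathbb{Z})$, the distance from $ik/n$ to the nearest integer. The key point is that $c_i=\operatorname{round}(ik/n)$ attains all these per-term minima simultaneously while remaining admissible: consecutive targets differ by $k/n\le 1$, so the rounded values increase in steps of $0$ or $1$ and run from $0$ to $k$. Consequently
\[
\min_{C}\,\sum_{i=1}^{n}\left|c_i-\frac{ik}{n}\right|=\sum_{i=1}^{n}\operatorname{dist}\!\left(\frac{ik}{n},\mathbb{Z}\right),
\]
so it remains to bound this fractional-part sum below by $\tfrac{n^2-k^2}{4n}$.

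To evaluate the sum I would exploit its periodicity. With $g=\gcd(n,k)$ and $n'=n/g$, the residues $ik\bmod n$ run through $\{0,g,2g,\dots,n-g\}$, each value attained exactly $g$ times as $i$ ranges over $1,\dots,n$, so that $\operatorname{dist}(ik/n,\mathbb{Z})$ takes each value $\min(r,n'-r)/n'$ ($r=0,\dots,n'-1$) exactly $g$ times, giving $\sum_i\operatorname{dist}(ik/n,\mathbb{Z})=\tfrac{g}{n'}\sum_{r=0}^{n'-1}\min(r,n'-r)$. The inner triangular sum equals $n'^2/4$ for $n'$ even and $(n'^2-1)/4$ for $n'$ odd, hence is always at least $(n'^2-1)/4$. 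Substituting and using $g^2\le k^2$ (because $g\mid k$) yields
\[
\sum_{i=1}^{n}\operatorname{dist}\!\left(\frac{ik}{n},\mathbb{Z}\right)\ge\frac{g}{n'}\cdot\frac{n'^2-1}{4}=\frac{n}{4}-\frac{g}{4n'}\ge\frac{n}{4}-\frac{k^2}{4n}=\frac{n^2-k^2}{4n},
\]
and dividing by $k(n-1)$ gives exactly $EMD_A(C,T)\ge(n+k)(n-k)/(4n(n-1)k)$.

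For tightness when $k\mid n$ I would exhibit the evenly spread cluster, namely the rounding cluster above (roughly every $(n/k)$-th record), and check that it realizes the bound. The step I would treat most carefully, and which I expect to be the main obstacle, is precisely this equality analysis: both inequalities in the chain above are governed by the parity of $n'=n/\gcd(n,k)$, and they become simultaneous equalities exactly when $k\mid n$ \emph{and} $n/k$ is odd (then $g=k$ and $n'$ is odd). In that case the spread cluster meets the bound cleanly; when $n/k$ is even the quoted value remains a valid lower bound but the true minimum exceeds it by the small term $k/\bigl(4n(n-1)\bigr)$, a parity subtlety I would flag in the tightness statement.
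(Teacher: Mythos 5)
Your proof is correct, and it takes a genuinely different and more rigorous route than the paper's. The paper argues by direct construction: assuming $k$ divides $n$, it exhibits the minimizing cluster (the $i$-th confidential value sitting at the median of the $i$-th block of $n/k$ ranks), computes its EMD as $k$ times the cost of spreading one atom of mass $1/k$ over one block, and dismisses the case where $k$ does not divide $n$ with the remark that the distance can only be larger. You instead start from the closed-form prefix-sum expression for EMD under the ordered distance, reduce the minimization to approximating the line $ik/n$ by an integer staircase, note that the term-by-term bound $|c_i - ik/n| \ge d_i$ (with $d_i$ the distance from $ik/n$ to the nearest integer) is simultaneously attainable by rounding, and evaluate $\sum_i d_i$ exactly through the periodicity of the residues $ik \bmod n$. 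This buys two things the paper does not deliver: a proof of the lower bound that genuinely covers the non-divisible case (your chain $\frac{n}{4} - \frac{g^2}{4n} \ge \frac{n^2-k^2}{4n}$ with $g = \gcd(n,k)$ replaces the paper's unproved ``slightly greater'' claim), and the exact value of the minimum, which exposes a parity issue the paper glosses over. Indeed, the paper's assertion that taking the element just before or just after the middle when $n/k$ is even ``ends up being the same'' is not correct: the triangular sum is then $(n/k)^2/4$ rather than $((n/k)^2-1)/4$, so the stated bound is tight only when $n/k$ is odd (for $n=4$, $k=2$ the true minimum is $1/6$ whereas the bound gives $1/8$). The paper's construction is shorter and makes the optimal cluster geometrically transparent; yours is the one that actually establishes the inequality for all $k$ and correctly delimits when it is attained.
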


\begin{proof}
	The EMD can intuitively be seen as the amount of work needed to transform
	the distribution of attribute $A$ within $C$ into 
the distribution of $A$ over $T$. The ``amount
	of work'' includes two factors: (i) the amount of probability mass
	that needs to be moved and (ii) the distance of the movement. When
	computing EMD for $t$-closeness, the distance of the movements
	of probability mass for numerical attributes is measured as the 
{\em ordered distance}~\cite{Li2007t-Closeness},
	that is, the difference between the ranks of the values of $A$ in $T$
	divided by $n-1$.
	
	For the sake of simplicity, assume that $k$ divides $n$. If that
	is not the case, the distance will be slightly greater, so the lower
	bound we compute is still valid. The probability mass of each of the
	values of $A$ is constant and equal to $1/n$ in $T$, and it 
is  constant and equal to $1/k$ in $C$. This means that the first
	factor that determines the EMD (the amount of probability mass
to be moved) is fixed. Therefore, to minimize 
	EMD we must minimize the second factor (the distance by which the
	probability mass must be moved). Clearly, 
to minimize the distance, the $i$-th value of $A$ in the cluster
	must lie in the middle of the $i$-th group of $n/k$ records of $T$.
	Figure~\ref{fig:min_distance} illustrates this fact.

	\begin{figure}[!t]
		\centering
		\includegraphics[bb=30bp 220bp 800bp 370bp,clip,width=7cm]{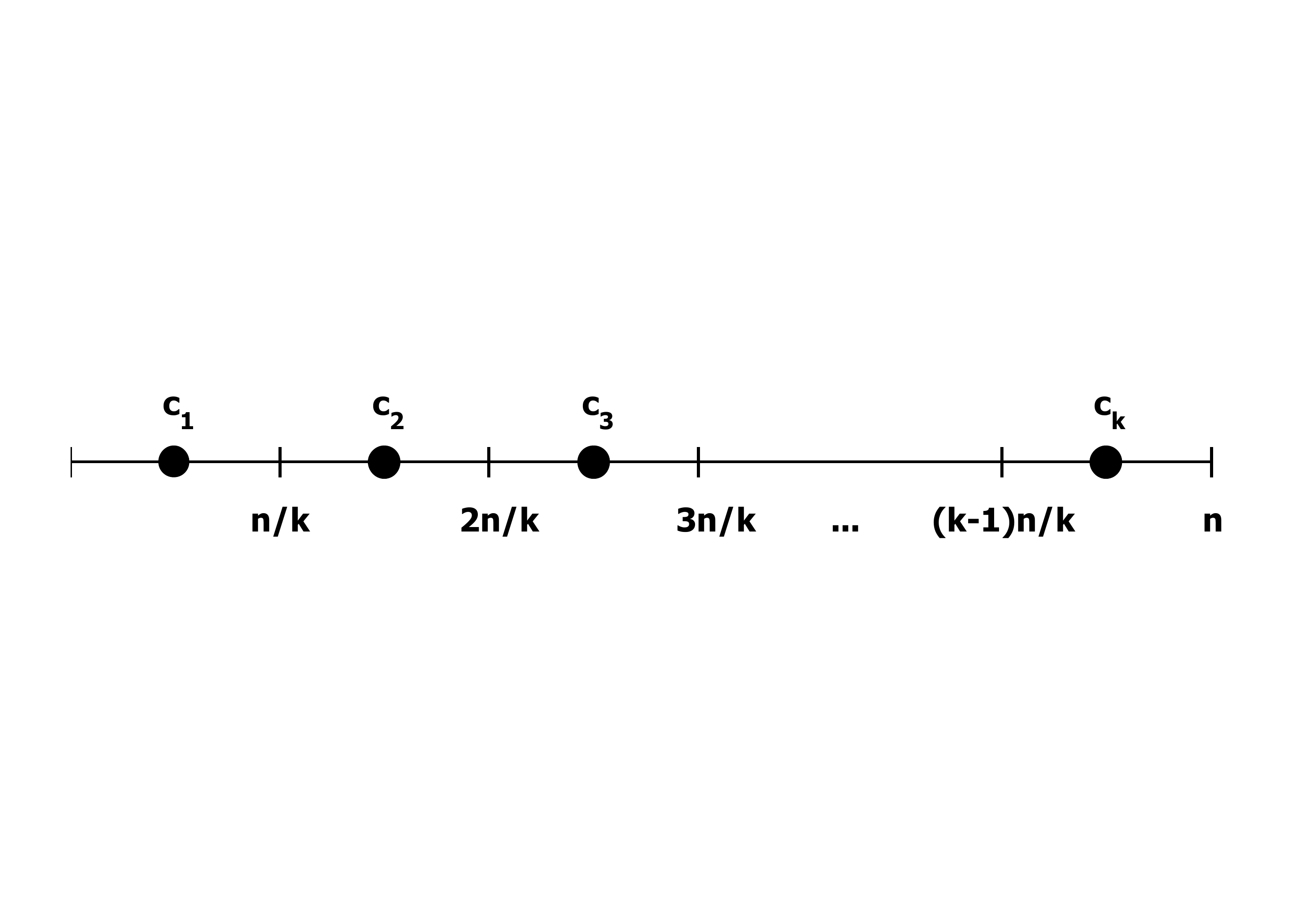}
%JOSEP. Canviada caption.
		\protect\caption{$t$-Closeness first, case $k$ divides $n$. Confidential attribute values $\{c_{1},c_{2},\ldots,c_{k}\}$ of the cluster $C$
			that minimizes the earth mover's distance to $T$. When 
the confidential
			attribute values in $T$ are grouped in $k$ subsets 
of $n/k$ values, $c_i$ is the median of the $i$-th subset for $i=1,\cdots,k$. \label{fig:min_distance}}
	\end{figure}

	In Figure~\ref{fig:min_distance} and using the 
ordered distance, the earth mover's
	distance can be computed as $k$ times the cost of distributing the
	probability mass of element $c_{1}$ among the $n/k$ elements in
	the first subset:
\begin{equation}
\label{minEMD}
	\min(EMD)=k\times\sum_{i=1}^{n/k}\frac{1}{n}\frac{\left|i-\nicefrac{\nicefrac{n}{k}+1}{2}\right|}{n-1}=\frac{(n+k)(n-k)}{4n(n-1)k}
\end{equation}
	Formula (\ref{minEMD}) takes element $(n/k+1)/2$ as the middle element of a
	cluster with $n/k$ elements. Strictly speaking, this is only possible
%JOSEP. IMPORTANT. Corregit aquí havia de ser n/k en comptes de k 
%when $k$ is not a multiple of 2. 
%When $k$ is a multiple of 2 we
when $n/k$ is odd.
When $n/k$ is even, we
%JOSEP. Canviat per floors i ceilings
	ought to take 
either $\lfloor (n/k+1)/2 \rfloor$, the element just before
	the middle, or $\lceil (n/k+1)/2\rceil$, the element just after
	the middle. In any case, the EMD ends up being the same as the 
one obtained in Formula (\ref{minEMD}). 
\end{proof}
{\em Note that, once $n$ and $t$ are fixed, Proposition~\ref{prop:min} 
determines the minimum value of $k$ required
for EMD to be smaller than $t$.} An issue with the construction of 
the $k$ values $c_1$, $\cdots$, $c_k$ 
depicted in Figure~\ref{fig:min_distance}
is that it is too restrictive. For instance, for given values of $n$
and $t$, if the minimal EMD value computed in Proposition~\ref{prop:min}
is exactly equal to $t$, 
%JOSEP. Canviat
then only clusters having as confidential attribute values 
$c_1$, $\cdots$, $c_k$ satisfy $t$-closeness (there may be 
only one such cluster).  
Any other cluster having different confidential attribute
values does not satisfy $t$-closeness. 
Moreover, in the construction of Figure~\ref{fig:min_distance},
the clusters are generated based only on the values of the confidential
attribute, which may lead to a large information loss in
terms of the quasi-identifiers. 

Given the limitations pointed out above, our goal is to guarantee
that the EMD of the clusters is below a specific value but allowing
the clustering algorithm enough freedom to select appropriate records
(in terms of quasi-identifiers) for each of the clusters. The approach
that we propose is similar to the one of Figure~\ref{fig:min_distance}:
we group the records in $T$ into $k$ subsets based on the confidential
attribute and we then generate clusters based on
%JOSEP. Reescrit. 
the quasi-identifiers with the constraint that each 
cluster should contain one record
from each of the $k$ subsets (the specific record is selected based on
the quasi-identifier attributes). Proposition~\ref{prop:upper_bound}
gives an upper bound on the level of $t$-closeness that we attain.
To simplify the derivation and 
the proof, we assume in the proposition that $k$ divides $n$.
\begin{prop}
	\label{prop:upper_bound}Let $T$ be a data set with $n$ records
%JOSEP. Preciso que es un atribut confidencial ordenable
	and let $A$ be a confidential 
attribute of $T$ whose values can be ranked. 
%JOSEP. Corregit n -> k
Let $S=\{S_{1},\ldots,S_{k}\}$
	be a partition of the records in $T$ into $k$ subsets of $n/k$
	records in ascending order of the attribute $A$. Let $C$ be a
	cluster that contains exactly one record from each of the 
subsets $S_{i}$, for $i=1, \cdots, k$.
	Then $EMD(C,T)\le(n-k)/(2(n-1)k)$.\end{prop}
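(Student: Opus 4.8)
The plan is to exhibit an explicit feasible transport plan between the within-cluster distribution of $A$ and its distribution over $T$, and then invoke the fact that $EMD$, being the \emph{minimum} transport cost, is bounded above by the cost of \emph{any} feasible plan. First I would rank the $n$ records of $T$ by their value of $A$, exactly as in the proof of Proposition~\ref{prop:min}, so that the record of rank $r$ carries mass $1/n$ in $T$ and the ordered distance between ranks $r$ and $s$ is $|r-s|/(n-1)$. Under this ranking each subset $S_i$ occupies the consecutive block of ranks $(i-1)(n/k)+1,\ldots,i(n/k)$, and the single record that $C$ draws from $S_i$ sits at some rank $\rho_i$ inside that block, carrying mass $1/k$ in the cluster.

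The transport plan I would use is purely \emph{local}: redistribute the mass $1/k$ sitting at $\rho_i$ uniformly, in portions of $1/n$, across the $n/k$ ranks of the block $S_i$. This is feasible because each block must receive total mass $(n/k)\cdot(1/n)=1/k$, which is exactly the mass $C$ places at $\rho_i$; supply and demand match block by block, and no mass crosses a block boundary. Hence
\[EMD(C,T)\le \frac{1}{n(n-1)}\sum_{i=1}^{k}\sum_{r\in S_i}|\rho_i-r|,\]
where the factor $1/(n-1)$ comes from the ordered distance and $1/n$ from the moved mass.

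It then remains to bound each inner sum uniformly over the choice of $\rho_i$. After shifting the block to $\{1,\ldots,n/k\}$, the quantity $\sum_{r}|\rho_i-r|$ is the sum of distances from a chosen point to $n/k$ equally spaced points; as a sum of absolute values it is convex in the position of $\rho_i$, so it is maximized at an endpoint, where it equals $(n/k)\bigl((n/k)-1\bigr)/2$. Substituting this worst case for every $i$ and simplifying gives
\[EMD(C,T)\le \frac{1}{n(n-1)}\cdot k\cdot\frac{(n/k)\bigl((n/k)-1\bigr)}{2}=\frac{n-k}{2(n-1)k},\]
which is the claimed bound and, crucially, holds for \emph{every} admissible cluster $C$.

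The only real subtlety is recognizing that the local plan need not be the optimal transport plan---the true $EMD$ may well be smaller---but this is exactly what makes the argument work: any feasible plan yields a valid upper bound, and the local plan is the one whose cost decouples across blocks and is trivial to maximize. The closed form for the sum of distances to equally spaced points and the final algebraic simplification are routine; the step requiring slight care is the endpoint maximization, since it is the choice of the extreme record in each block that realizes the stated bound.
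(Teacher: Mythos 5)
Your proof is correct and follows essentially the same route as the paper's: both redistribute the cluster's mass $1/k$ locally within each block $S_i$ in portions of $1/n$ and evaluate the worst case at the endpoint of the block, yielding the identical sum $k\cdot\frac{1}{n(n-1)}\cdot\frac{(n/k)((n/k)-1)}{2}=\frac{n-k}{2(n-1)k}$. Your write-up is in fact slightly more careful than the paper's, since you bound the cost of an explicit feasible transport plan uniformly over all admissible clusters (via convexity of $\sum_r|\rho_i-r|$ in $\rho_i$), whereas the paper simply asserts that the endpoint configuration maximizes EMD and computes the cost for that configuration alone.
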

\begin{proof}
%JOSEP. Esborrat paràgraf, que substitueixo per un parèntesi més avall.
%	The proof is reminiscent of the proof 
%of Proposition~\ref{prop:min}.
%	In that proof, we selected the median element of 
%each interval to minimize
%	the distance by which the probability mass must be moved. Now we select
%	the maximum or the minimum element of the interval 
%to get an upper bound of the distance by which the probability mass must be moved.
	The factors that determine EMD are: (i) the amount of probability
	mass that needs to be moved and (ii) the distance by which it is
moved. The first factor
	is fixed and cannot be modified: each of the records in $T$ has probability
	mass $1/n$, and each of the records in $C$ has probability mass
	of $1/k$. As to the second factor, to find an upper bound to 
	EMD, we need to consider a cluster $C$ that maximizes EMD: the
	records selected for inclusion 
into $C$ must be at the lower (or upper) end 
	of the sets $S_{i}$ for $i=1, \cdots, k$. 
This is depicted in Figure~\ref{fig:max_distance}.
%JOSEP. Afegit parèntesi en comptes de paràgraf de dalt.
(Note the analogy with the proof of Proposition~\ref{prop:min}: there we 
took the median of each $S_i$ to minimize EMD.)
	
	\begin{figure}[!t]
		\begin{centering}
			\includegraphics[bb=30bp 220bp 800bp 370bp,clip,width=7cm]{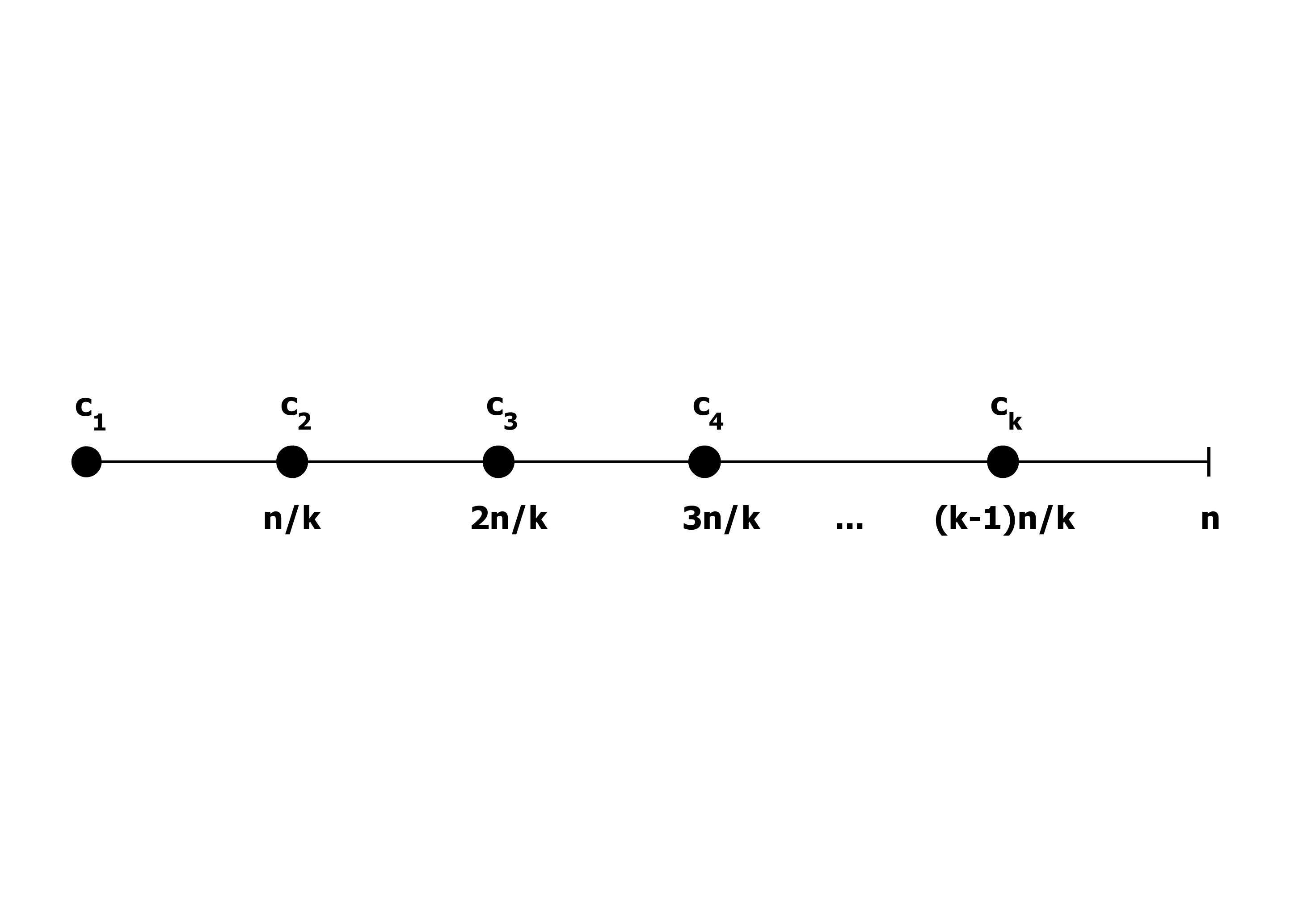}
			
			\par\end{centering}
		\hfil
%JOSEP. Reescrita caption
		\protect\caption{$t$-Closeness first, case $k$ divides $n$. Confidential attribute values $\{c_{1},c_{2},\ldots,c_{k}\}$
			of the cluster $C$ that maximizes the earth mover's distance to $T$. 
When the confidential
			attribute values in $T$ are grouped in $k$ subsets of $n/k$ values, $c_i$ is taken as the minimum value of the $i$-th subset for 
$i=1,\cdots,k$.  \label{fig:max_distance}}
	\end{figure}
	
	EMD for the case in Figure~\ref{fig:max_distance} can be computed
	as $k$ times the cost of distributing the probability mass of $c_{1}$
	among the $n/k$ elements of $S_{1}$:
\begin{equation}
\label{maxEMD}
%JOSEP. Posada etiqueta i corregits índexs perquè vagin de 1 a n/k
%per analogia amb la fórmula (1).
	\max(EMD)=k\times\sum_{i=1}^{n/k}\frac{1}{n}\frac{i-1}{n-1}=\frac{n-k}{2(n-1)k}
\end{equation}
\end{proof}

With the upper bound on EMD given by Proposition~\ref{prop:upper_bound},
we can determine the cluster size required in the microaggregation:
%JOSEP. Afegit això.
just replace $EMD(C,T)$ by $t$ on the left-hand side of the bound
and solve for $k$ to get a lower bound for $k$.
For a data set containing $n$ records and for a required level of
$t$-closeness and $k$-anonymity, the cluster size must be
\begin{equation}
	%David: Afegit símbol \uparrow per indicar que s'arrodoneix la fracció cap a dalt en calcular la k'
	%Jordi. No sé si aquesta notació és l'habitual. Potser caldria usar \lceil i \rceil, com al comentari
	%       \max\{k,\lceil\frac{n}{2(n-1)t+1}\rceil\}\label{eq:cluster_size}
%JOSEP. Sí he posat \lceil i \rceil
	\max\{k,\lceil\frac{n}{2(n-1)t+1}\rceil\}\label{eq:cluster_size}
\end{equation}

To keep things simple, so far we have assumed that $k$ divides
$n$. However, the algorithm to generate $t$-close data sets must
work even if that is not the case. If discarding some records from
the original data set is a viable option, we could discard records
until $k$ divides the new $n$, and proceed as described above. If
records cannot be discarded, some of the clusters would need to contain
more than $k$ records. In particular, we may allow some clusters to have
either $k$ or $k+1$ records.

%JOSEP. Canviat [n/k] per \lfloor n/k \rfloor
If we group the records into $k$ sets with $\lfloor n/k \rfloor$ records, then
%JOSEP. Canviat r=\mod(n.k) per r = n \bmod k
$r=n \bmod k$ records remain. We propose to assign the remaining
$r$ records to one of the subsets. Then, when generating the clusters, two
records from this subset are added to the first $r$ clusters. This
is only possible if 
%JOSEP. Canviat [n/k] per \lfloor n/k \rfloor arreu
%JOSEP. Canviar < per \leq aquí sota
$r \leq \lfloor n/k \rfloor$ (the number of remaining records is
not greater than the number of generated clusters); otherwise, there will be
records not assigned to any cluster. Note, however, that
using a cluster size $k$ with $r \geq \lfloor n/k \rfloor$ 
makes no sense: 
%JOSEP. Canviada frase de sota.
since all clusters receive more than $k$ records, what is reasonable
is to adapt to reality by increasing $k$.
Specifically, to avoid having $r \geq \lfloor n/k \rfloor$, $k$ is adjusted as
%JOSEP. Canviat [ ] per \lfloor \rfloor
\begin{equation}
\label{adjust}
k=k+\lfloor (n \bmod k)/\lfloor n/k \rfloor \rfloor.
\end{equation}

Adding two records from one of the subsets to a cluster increases
the EMD of the cluster. To minimize the impact
over the EMD, we need to reduce the work required to distribute the
probability mass of the extra record across the whole range of values.
Hence, the extra record must be close to the median record of the
data set. Figure~\ref{fig:types_of_clusters_1} illustrates the types
of clusters that we allow when $k$ is odd (there is a single subset
in the middle), and Figure~\ref{fig:types_of_clusters_2} illustrates
the types of clusters that we allow when $k$ is even (there are two
subsets in the middle). Essentially, when $k$ is odd, the additional
records are added to $S_{(k+1)/2}$ (the subset in the middle); then, we
generate clusters with size $k$ and clusters with size $k+1$, which
take two records from $S_{(k+1)/2}$. When $k$ is even, the additional
records are split between $S_{(k-1)/2}$ and $S_{(k+1)/2}$ (the
subsets in the middle); then, we generate clusters with size $k$ and clusters
with size $k+1$, some with an additional record from $S_{(k-1)/2}$
and some from $S_{k/2}$.

\begin{figure}[!t]
	\begin{centering}
%JOSEP. IMPORTANT. Caldria canviar [n/k] a la figura per \lfloor n/k \rfloor 
%Jordi2. Fet
		\includegraphics[width=8.8cm]{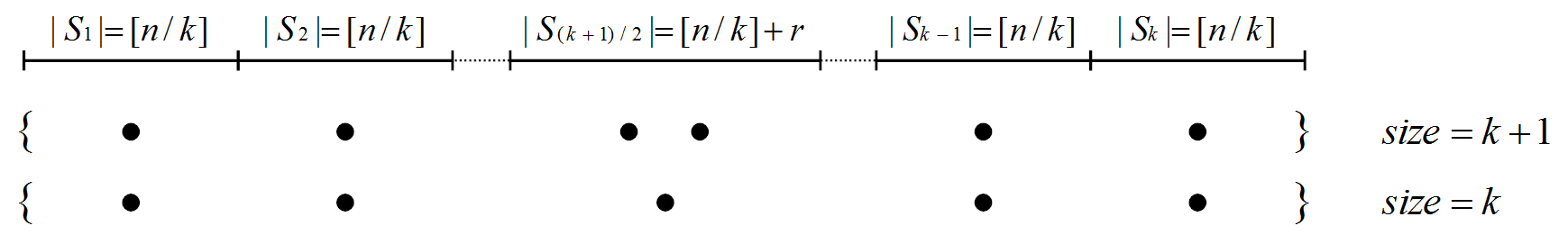}
		
		\par\end{centering}
%JOSEP. Afegit t-closeness first a la caption
	\protect\caption{$t$-Closeness first, case $k$ does not divide $n$. Types of clusters when $k$ is odd. Top row, the data set is split
		into $k$ subsets. Central row, cluster with $k+1$ records. 
Bottom row, cluster with $k$ records.\label{fig:types_of_clusters_1}}

\end{figure}

\begin{figure}[!t]
	\begin{centering}
%JOSEP. IMPORTANT. Caldria canviar [n/k] a la figura per \lfloor n/k \rfloor 
%JOSEP. IMPORTANT. També caldria canviar el primer [n/k] + r/2 
% per \lfloor n/k \rfloor + \lceil r/2 \rceil 
% i el segon [n/k] + r/2
% per \lfloor n/k \rfloor + \lfloor r/2 \rfloor 
%Jordi2. Fet
		\includegraphics[width=8.8cm]{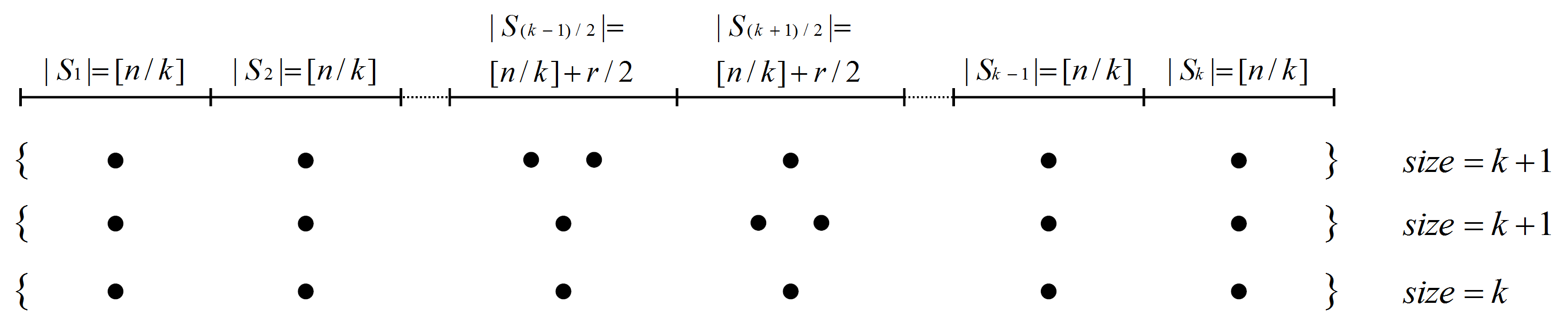}
		
		\par\end{centering}
	\hfil
%JOSEP. Afegit t-closeness first a la caption
\protect\caption{$t$-Closeness first, case $k$ does not divide $n$. Types of clusters when $k$ is even. Top row, the data set is split
		into $k$ subsets. Central rows, clusters with $k+1$ records
		(one with two records from $S_{(k-1)/2}$ and the other with two records
		from $S_{(k+1)/2}$). Bottom row, cluster with $k$ records.\label{fig:types_of_clusters_2}}
\end{figure}

Just as we did in Proposition~\ref{prop:upper_bound},
we can compute an upper bound for the EMD of the clusters depicted
in Figures~\ref{fig:types_of_clusters_1} and~\ref{fig:types_of_clusters_2}.
%Jordi2. Reescrit. Dono més detalls del càlcul per evitar ambigüitats.
%For a cluster $C$ this computation can be performed in two steps:
%first we move 
%JOSEP. IMPORTANT. Reescrita resta del paràgraf. Jordi, comprova-ho si et plau. 
%probability mass from the subsets having more
%records to the rest of subsets, in order to have all subsets
%with the same mass; then, we take the cluster
%and compute the cost
%of redistributing the probability mass of the cluster elements 
%within the subsets. 
The EMD of a cluster $C$ measures the cost of transforming the distribution
of $C$ to the distribution of the data set. The cost
of the probability mass redistribution can be computed in two steps as 
follows.
%JOSEPNOU. Crec que és al revés. Volem que el pes del subset al cluster
%sigui igual que el pes del subset al data set, no?
%Ho he canviat.
First, we want the weight of each subset $S_1,\ldots,S_k$ in cluster
$C$ (the proportion of records in $C$ coming from each subset) to
be equal to the weight of the subset in the data set; 
to this end, we redistribute
the probability mass of the cluster between subsets. 
This redistribution cost, $cost_0$, equals
the EMD between the cluster and the data set when the 
distributions have been discretized to the subsets. 
Then, for each subset $S_i \in \{S_1,\ldots,S_k\}$, 
we compute $cost_i$, an 
upper bound of the cost of distributing the probability
mass $|S_i|/n$ assigned to the subset among its elements
(this is analogous to the mass distribution 
in the proof of Proposition~\ref{prop:upper_bound}). 
The EMD is the sum $cost_0 + cost_1 + \ldots + cost_k$. 
The fact that there are subsets with different sizes and there
are clusters with different sizes makes formulas 
quite tedious and unwieldy, even though the resulting bounds
on EMD are very similar to the one obtained in
Proposition~\ref{prop:upper_bound}. For these reasons, 
we will use the latter as an approximation even when $k$
does not divide $n$; in particular, 
we will determine the cluster size using Expression (\ref{eq:cluster_size}).

%JOSEP. Reescrit paràgraf.
Algorithm~\ref{alg:t-closeness-first} formalizes 
the above described procedure to generate a $k$-anonymous $t$-close
data set.
It makes use of Expressions (\ref{eq:cluster_size}) and (\ref{adjust})
 to determine and adjust the cluster size, respectively.
 
 \begin{algorithm}
 	\protect\caption{\label{alg:t-closeness-first}$t$-Closeness-first microaggregation algorithm. 
 		%JOSEP. Afegida frase sobre distàncies.
 		Distances between records are computed
 		in terms of the quasi-identifiers.}
 	
 	\begin{algorithmic}[0]
 		
 	\State {\bf Data:} $X$: original data set
 	\State \hspace{1.11cm}$n$: size of $X$
 	\State \hspace{1.11cm}$k$: minimum cluster size
 	\State \hspace{1.11cm}$t$: $t$-closeness level
 	\State {\bf Result} Set of clusters satisfying $k$-anonymity and $t$-closeness
 	\vspace{0.3cm}
 	%David: Afegit símbol \uparrow per indicar que s'arrodoneix la fracció cap a dalt en calcular la k'
 	%Jordi. No sé si aquesta notació és l'habitual. Potser caldria usar \lceil i \rceil, com al comentari
 	%       $k'=\max\{k,\lceil\frac{n}{2(n-1)t+1}\rceil\}$\;
 	%JOSEP. Canviat [ ] per \lceil i \rceil i canvio k' per k a tot l'algorisme
 	\State $k=\max\{k,\lceil \frac{n}{2(n-1)t+1}\rceil\}$
 	%JOSEP. Canviat [ ] per \lceil i \rceil
 	\State $k=k + \lceil (n \bmod k)/\lfloor n/k\rfloor \rceil$
 	\State $Clusters=\emptyset$
 	
 	%JOSEP. [|X|/k]' -> \lfloor |X|/k' \rfloor	
 	%JOSEP. IMPORTANT. Canviada notació de X_i a S_i per coherència amb el paper
 	\State Split $X$ into $S_1,\ldots, S_{k}$ subsets with 
 	$\lfloor n/k \rfloor$ records 
 	in ascending order of the confidential attribute, with any remaining 
 	$(n \bmod k)$ records 
 	assigned to the central subset(s)
 	\While{$X\ne\emptyset$}
 		%JOSEP. Canviat X' per X a la resta de l'algorisme
 		% Com es fa això de prendre dos elements dels clusters centrals?
 		\State $x_a$ = average record of $X$
 		
 		\State $x_0$ = most distant record from $x_a$ in $X$
 		\State $C=\emptyset$
 		\For{$i=1,\ldots,k$}
 			\State $x$ = closest record to $x_0$ in $S_i$
 			\State $C = C \cup \{x\}$
 			\State $S_i=S_i \setminus \{x\}$
 			\State $X = X \setminus \{x\}$
 			%JOSEP. IMPORTANT. Afegit això
 			%			\tcc{Take second record from central subset(s) if $(n \mod k) \neq 0$}
 			%                                \If{$|Clusters| < (n \mod k)$
 			%and [[$k$ is odd and $S_i$ is the central subset] or [$k$ is even
 			%and [$S_i$ is the first central subset and still has excess records] or 
 			%[$S_i$ is the second central subset and no second record was taken from $S_{i-1}$]]]}
 			%                                        {
 			%                                        $x$ = closest record to $x_0$ in $S_i$\;
 			%                                $C = C \cup \{x\}$\;
 			%                                $S_i=S_i \setminus \{x\}$\;
 			%                                $X = X \setminus \{x\}$\;
 			%                                        }
 			%Jordi2. Modificat
 			%No va pel cas k=2
 			\LineComment Take second record from $S_i$ if it contains extra records and no extra record has been already added to $C$
 			\If{$|S_i|>|S_1|$ {\bf and} $|C|=i$}
 				\State $x$ = closest record to $x_0$ in $S_i$
 				\State $C = C \cup \{x\}$
 				\State $S_i=S_i \setminus \{x\}$
 				\State $X = X \setminus \{x\}$
 			\EndIf
 		\EndFor
 		\State $Clusters = Clusters \cup \{C\}$
 		
 		\If{$X\ne\emptyset$}
 			\State $x_1$ = most distant record from $x_0$ in $X$
 			\State $C=\emptyset$
 			\For{$i=1,\ldots,k$}
 				%JOSEP. IMPORTANT. Canviat x_0 per x_1 aquí sota
 				%Jordi2. ok
 				\State $x$ = closest record to $x_1$ in $S_i$
 				\State $C = C \cup \{x\}$
 				\State $S_i=S_i \setminus \{x\}$
 				\State $X = X \setminus \{x\}$
 				\If{$|S_i|>|S_1|$ {\bf and} $|C|=i$}
 					%JOSEP. IMPORTANT. Canviat x_0 per x_1 aquí sota
 					%Jordi2. ok
 					\State $x$ = closest record to $x_1$ in $S_i$
 					\State $C = C \cup \{x\}$
 					\State $S_i=S_i \setminus \{x\}$
 					\State $X = X \setminus \{x\}$					
 				\EndIf	
 			\EndFor
 			\State $Clusters = Clusters \cup \{C\}$
 		\EndIf
 	\EndWhile
 	\State {\bf return} $Clusters$
 	\end{algorithmic}
 	
 \end{algorithm}
 
 %Jordi3. Afegit cost computacional
%JOSEP1504. Lleugers canvis.
 In terms of computational cost, Algorithm~\ref{alg:t-closeness-first} has a great advantage over 
 Algorithms~\ref{alg:micro_merge} and~\ref{alg:k-anonymity-first}: when running
 Algorithm~\ref{alg:t-closeness-first}, we know that by
construction the generated clusters satisfy $t$-closeness, so there is no need to compute any EMD distance. Algorithm~\ref{alg:t-closeness-first}
 has cost $\mathcal{O}(n^2/k)$, the same cost order as MDAV (on which 
it is based). Actually,
 Algorithm~\ref{alg:t-closeness-first} is even slightly more efficient 
than MDAV: all operations being
 equal, some of the computations that MDAV performs on the entire data set are performed by 
 Algorithm~\ref{alg:t-closeness-first} just on one of the 
subsets of $n/k$ records.

\section{Empirical evaluation}
\label{sec:empirical}

%David2: afegits detalls dels nous tests
In this section we empirically evaluate and compare
the proposed algorithms using several data sets and according
to different metrics: actual cluster size, 
speed and scalability, and data utility preservation. 

%David2: subseccions segons el tipus de prova o dataset
%JOSEP1504. Canviat el títol
\subsection{Actual cluster size}
\label{sec:behavior}

In a first battery of tests we used as evaluation data 
the Census data set~\cite{Brand}, which is usual to test 
privacy protection methods~\cite{Yanc02,Lasz05,Domingo10}
and contains 1,080 records with numerical attributes. 
Similar to~\cite{Domingo10}, we took attributes
TAXINC (Taxable income amount) and POTHVAL (Total other persons income)
as quasi-identifiers, and FEDTAX (Federal income tax liability)
and FICA (Social Security retirement payroll deduction)
as confidential attributes. 

%JOSEP. conflicting -> different. Podria ser que un conjunt
%de dades original ja fos de natural k-anònim i t-close, i per tant
% els objectius no són necessàriament en conflicte.
Because $k$-anonymity and $t$-closeness pursue different goals
(the former clusters records with similar quasi-identifiers while 
the latter requires clusters with a distribution of 
confidential attributes similar to the one of the entire data set),
we defined two data sets according to the
correlation between the values of quasi-identifier and confidential attributes:

\begin{itemize}
\item {\em Moderately correlated data set (MCD)}. It consists of 
1,080 records with TAXINC and POTHVAL 
as quasi-identifier attributes, and FEDTAX as confidential attribute. The correlation 
between both types of attributes is 0.52. This represents the most 
usual scenario in which quasi-identifiers and confidential attributes 
show some correlation. 
\item {\em Highly correlated data set (HCD)}. 
%JOSEP. Canviada frase poc clara
%in this case, quasi-identifiers
%are maintained, 
It uses the same quasi-identifiers as MCD,
but it takes FICA as confidential attribute. The correlation between
both types of attributes is 0.92. This highly correlated data set represents
a worst-case scenario for our algorithms because, to fulfill a certain 
$t$-closeness level 
({\em i.e.}, to ensure a certain distribution of confidential values), 
we are likely to be forced to microaggregate records
with significantly diverse quasi-identifier values, thereby 
incurring more information loss than in the MCD data set.
\end{itemize}

%David2: canviat
%JOSEP1504. Reescrit de manera més precisa.
By applying the three algorithms to these two data sets
for different values of $k$ and $t$, we will show 
how close to $k$ are the sizes of clusters formed by 
each algorithm for each value of $t$ to be enforced.
{\em To minimize information loss, the closer 
all cluster sizes to $k$, the better.}
The $k$ values have been taken in the range 2-30, 
which covers the most usual $k$-anonymity values ({\em e.g.} 
$k$ is taken between 3 and 10 in~\cite{Domingo01}),
%David: pendent d'afegir ref sobre valors habituals de k
%JOSEP. La veritat és que k=2 no s'agafa mai, però per la banda 
% de dalt és difícil de dir fins quina k tocaria agafar.
%He posat la referència a un paper nostre que agafava paràmetres
% de microagregació entre 3 i 10.
whereas the $t$ values have been taken in the range 0.01-0.25 
(where 0.25 is the upper bound of $t$-closeness for this data set for 
the lowest $k$, that is $k=2$, 
according to Proposition~\ref{prop:upper_bound}).

We start by analyzing the behavior of Algorithm~\ref{alg:micro_merge}, 
in which records are first microaggregated in clusters of size $k$ that 
are thereafter merged until $t$-closeness is fulfilled. Table~\ref{tab:micro-merge} shows the
actual level of microaggregation that results from the merging process: 
%JOSEP. Aclarit això de minimum.
{\em minimum}, that is, the size of the smallest cluster 
(which determines the actual $k$-anonymity level achieved), 
and {\em average},
that is, the average size of the merged clusters.

\begin{table*}
\caption{\label{tab:micro-merge}Algorithm~\ref{alg:micro_merge}: actual microaggregation (minimum and average
size of the clusters, respectively) resulting for several values of $k$ and $t$ for the MCD and HCD data sets}
\centering
\scriptsize
\begin{tabular}{|l|cc|cc|cc|cc|cc|cc|cc|}\hline
 & \multicolumn{2}{c|}{$t=0.01$} & \multicolumn{2}{c|}{$t=0.05$} & \multicolumn{2}{c|}{$t=0.09$} & \multicolumn{2}{c|}{$t=0.13$} & \multicolumn{2}{c|}{$t=0.17$} & \multicolumn{2}{c|}{$t=0.21$} & \multicolumn{2}{c|}{$t=0.25$}\\
& MCD & HCD & MCD & HCD & MCD & HCD & MCD & HCD & MCD & HCD & MCD & HCD & MCD & HCD\\
\hline
$k=2$ & 1080/1080 & 1080/1080  & 56/120 & 36/98  & 20/42 & 16/31 & 8/20 & 8/52 & 4/10 & 4/9 & 4/7 & 4/7 & 2/8 & 2/5 \\
$k=5$ & 1080/1080 & 1080/1080  & 385/540 & 200/216  & 40/154 & 40/60 & 20/47 & 20/80 & 10/24 & 10/21 & 10/17 & 10/15 & 5/12 & 5/11 \\
$k=10$ & 1080/1080 & 1080/1080  & 1080/1080 & 1080/1080  & 110/270 & 180/216 & 40/108 & 40/190 & 20/57 & 20/47 & 20/35 & 20/31 & 10/24 & 10/20 \\
$k=15$ & 1080/1080 & 1080/1080  & 1080/1080 & 495/540  & 135/360 & 195/270 & 45/90 & 60/270 & 30/64 & 30/68 & 30/45 & 30/54 & 15/33 & 15/33 \\
$k=20$ & 1080/1080 & 1080/1080  & 380/540 & 240/360  & 160/216 & 180/216 & 80/154 & 60/140 & 40/83 & 40/72 & 40/54 & 40/60 & 20/37 & 20/40 \\
$k=25$ & 1080/1080 & 1080/1080  & 1080/1080 & 1080/1080  & 1080/1080 & 230/360 & 455/540 & 50/250 & 50/180 & 50/98 & 50/90 & 50/72 & 25/72 & 25/48 \\
$k=30$ & 1080/1080 & 1080/1080  & 540/540 & 1080/1080  & 270/360 & 330/540 & 120/180 & 150/390 & 60/98 & 60/108 & 60/77 & 60/90 & 30/57 & 30/57 \\
\hline
\end{tabular}
\end{table*} 

It can be seen that, in many cases, the actual level of microaggregation
is significantly higher than the value of $k$. This is undesirable because 
the larger the clusters, the higher the information loss. 
We also see that {\em the size of the clusters tends to increase} for both data sets {\em as}:
\begin {itemize}
\item i) {\em the parameter $t$ of  $t$-closeness decreases}: since clusters have been created without
considering the desired $t$-closeness, it is unlikely that 
they satisfy it as $t$ gets smaller. 
Thus, to decrease, if necessary, the distance between
the distribution of confidential attributes within each cluster and 
over the entire data set,
the algorithm merges the already created clusters (thereby 
increasing their cardinality);
in the worst case ({\em i.e.}, $t$ around 0.01-0.05), this implies grouping all 1,080 records in a single cluster.  
\item ii) {\em the initial level of $k$-anonymity increases}: the coarser the initial microaggregation,  
the more effort ({\em i.e.}, merging) is needed to achieve a certain $t$-closeness level. 
\end {itemize}

We also observe a noticeable difference between the minimum and average cardinality of the
clusters, which suggests that the microggregation of records that we obtain in practice
is far from optimal.

\begin{table*}
\caption{\label{tab:k-anonymity-first}Algorithm~\ref{alg:k-anonymity-first}: actual microaggregation (minimum and average
size of the clusters, respectively) resulting for several values of $k$ and $t$ for the MCD and HCD data sets.}
\centering
\scriptsize
\begin{tabular}{|l|cc|cc|cc|cc|cc|cc|cc|}\hline
 & \multicolumn{2}{c|}{$t=0.01$} & \multicolumn{2}{c|}{$t=0.05$} & \multicolumn{2}{c|}{$t=0.09$} & \multicolumn{2}{c|}{$t=0.13$} & \multicolumn{2}{c|}{$t=0.17$} & \multicolumn{2}{c|}{$t=0.21$} & \multicolumn{2}{c|}{$t=0.25$}\\
& MCD & HCD & MCD & HCD & MCD & HCD & MCD & HCD & MCD & HCD & MCD & HCD & MCD & HCD\\
\hline
$k=2$ & 164/216 & 156/360  & 8/10 & 8/11  & 4/7 & 4/7 & 4/6 & 4/4 & 2/3 & 2/3 & 2/3 & 2/3 & 2/3 & 2/3 \\
$k=5$ & 40/64 & 80/154  & 10/16 & 10/10  & 5/7 & 5/8 & 5/7 & 5/8 & 5/7 & 5/7 & 5/7 & 5/8 & 5/6 & 5/7 \\
$k=10$ & 40/108 & 80/135  & 10/17 & 10/17  & 10/17 & 10/17 & 10/15 & 10/16 & 10/15 & 10/14 & 10/13 & 10/14 & 10/12 & 10/12 \\
$k=15$ & 30/57 & 30/60  & 15/28 & 15/30  & 15/25 & 15/26 & 15/23 & 15/23 & 15/23 & 15/22 & 15/19 & 15/21 & 15/16 & 15/17 \\
$k=20$ & 40/54 & 40/49  & 20/37 & 20/43  & 20/35 & 20/36 & 20/32 & 20/32 & 20/31 & 20/29 & 20/26 & 20/28 & 20/22 & 20/23 \\
$k=25$ & 50/51 & 50/51  & 25/51 & 25/51  & 25/43 & 25/43 & 25/39 & 25/39 & 25/40 & 25/37 & 25/32 & 25/35 & 25/28 & 25/26 \\
$k=30$ & 30/57 & 60/64  & 30/68 & 30/64  & 30/54 & 30/54 & 30/49 & 30/47 & 30/47 & 30/43 & 30/37 & 30/42 & 30/34 & 30/34 \\
\hline
\end{tabular}
\end{table*} 

Table~\ref{tab:k-anonymity-first} shows the results for Algorithm~\ref{alg:k-anonymity-first}.
With this algorithm, we observe that the actual microaggregation levels are significantly smaller
than in the previous case for the same values of $k$ and $t$, and so is the difference
between the minimum and average cardinality of the clusters. 
Now $t$-closeness is enforced after creating 
each cluster rather than after creating all clusters. Thus, once a cluster is created, 
some of the $k$ records in that cluster may be replaced
by unclustered records until $t$-closeness is satisfied; doing so 
does not increase the cardinality of the cluster, even though it 
may end up clustering records 
with less homogeneous quasi-identifiers and thereby yielding 
a higher loss of information. Only 
if the replacement does not satisfy the desired $t$-closeness, the clusters 
are merged
like in Algorithm~\ref{alg:micro_merge}, 
thereby increasing the microaggregation level
%JOSEP. Afegit parèntesi.
(in fact, as suggested in Section~\ref{sub:k-anonymity-first}, we use 
Algorithm~\ref{alg:k-anonymity-first} as the microaggregation
function of Algorithm~\ref{alg:micro_merge}). 
The results shown in Table~\ref{tab:k-anonymity-first}
suggest that this process occurs for the smallest $t$-closeness values
({\em i.e.}, 0.01-0.05), which are the ones that impose the strictest constraint. 

The differences between the two data sets are more noticeable if we look at the
average cardinality of the clusters: the HCD data set results in a larger average cardinality,
because the initial clusters present more homogeneous 
%JOSEP. Afegit parèntesi.
confidential values (these are very correlated to the more homogeneous
quasi-identifier values obtained for the first clusters) 
and tend to require more effort ({\em i.e.}, replacements and mergers) 
to attain $t$-closeness. 

\begin{table*}
\caption{\label{tab:t-closeness-first}Algorithm~\ref{alg:t-closeness-first}: actual microaggregation (minimum and average
size of the clusters, respectively) resulting for several values of $k$ and $t$ for the MCD and HCD data sets.}
\centering
\scriptsize
\begin{tabular}{|l|cc|cc|cc|cc|cc|cc|cc|}\hline
 & \multicolumn{2}{c|}{$t=0.01$} & \multicolumn{2}{c|}{$t=0.05$} & \multicolumn{2}{c|}{$t=0.09$} & \multicolumn{2}{c|}{$t=0.13$} & \multicolumn{2}{c|}{$t=0.17$} & \multicolumn{2}{c|}{$t=0.21$} & \multicolumn{2}{c|}{$t=0.25$}\\
& MCD & HCD & MCD & HCD & MCD & HCD & MCD & HCD & MCD & HCD & MCD & HCD & MCD & HCD\\
\hline
$k=2$ & 49/49 & 49/49  & 10/10 & 10/10  & 6/6 & 6/6 & 4/4 & 4/4 & 3/3 & 3/3 & 3/3 & 3/3 & 2/2 & 2/2 \\
$k=5$ & 49/49 & 49/49  & 10/10 & 10/10  & 6/6 & 6/6 & 5/5 & 5/5 & 5/5 & 5/5 & 5/5 & 5/5 & 5/5 & 5/5 \\
$k=10$ & 49/49 & 49/49  & 10/10 & 10/10  & 10/10 & 10/10 & 10/10 & 10/10 & 10/10 & 10/10 & 10/10 & 10/10 & 10/10 & 10/10 \\
$k=15$ & 49/49 & 49/49  & 15/15 & 15/15  & 15/15 & 15/15 & 15/15 & 15/15 & 15/15 & 15/15 & 15/15 & 15/15 & 15/15 & 15/15 \\
$k=20$ & 49/49 & 49/49  & 20/20 & 20/20  & 20/20 & 20/20 & 20/20 & 20/20 & 20/20 & 20/20 & 20/20 & 20/20 & 20/20 & 20/20 \\
$k=25$ & 49/49 & 49/49  & 25/25 & 25/25  & 25/25 & 25/25 & 25/25 & 25/25 & 25/25 & 25/25 & 25/25 & 25/25 & 25/25 & 25/25 \\
$k=30$ & 49/49 & 49/49  & 30/30 & 30/30  & 30/30 & 30/30 & 30/30 & 30/30 & 30/30 & 30/30 & 30/30 & 30/30 & 30/30 & 30/30 \\
\hline
\end{tabular}
\end{table*}

Finally, Table~\ref{tab:t-closeness-first} shows the results for Algorithm~\ref{alg:t-closeness-first}.
Figures in this table show that Algorithm~\ref{alg:t-closeness-first}
is the one achieving an actual microaggregation level closest to 
the desired $k$.
Moreover, since the cardinality of the data sets (1,080 records) is a multiple of the values of $k$,
all clusters can be formed with the same cardinality $k$ ({\em i.e.}, clusters are perfectly balanced).
%JOSEP. Canviada frase de sota, pq no sembli que parlem de microagregació òptima
% en el sentit "tradicional" (minimitzar SSE), pq això és NP-difícil.
%JOSEP. Reescrita resta del paràgraf.
Indeed, as stated in Section~\ref{sub:t-closeness-first},
Algorithm~\ref{alg:t-closeness-first} 
seeks the smallest clusters whose cardinality 
is at least $k$ and which satisfy a 
pre-specified level of $t$-closeness. 
To do so, it prioritizes the fulfillment of 
$t$-closeness over the homogeneity of quasi-identifiers 
in cluster formation. Because of this strategy,
there are no differences between the MCD and HCD data sets;
in fact, we can see that for most parameter choices and for both data sets
the minimum and average cluster sizes are $k$.

In comparison with Algorithm~\ref{alg:k-anonymity-first}, we observe that,
even though in some cases ({\em e.g.}, for $t=0.05$ and $k=2$) the minimum
cardinality is greater with Algorithm~\ref{alg:t-closeness-first}, 
the average cardinality is always smaller with this algorithm.
This is a consequence of the more precise 
microaggregation implemented by Algorithm~\ref{alg:t-closeness-first}.

%David2: afegida seccio nomes amb el dataset gran
%JOSEP1504. Canviat títol, pq performance era massa genèric.
\subsection{Speed and scalability}
\label{perform}

The second part of the evaluation focuses on measuring the speed
and scalability of the three algorithms with a larger data set.

To that end, we took a higher-dimensional data set from the 
the Patient Discharge Data for year 2010 of Californian hospitals,
which are provided by 
California's Office of Statewide Health Planning and Development~\cite{oshpd}. 
We took the data set with the largest number of entries
(Cedars Sinai Medical Center, with 55,668 patient records). 
From these, we removed records with missing attribute values and 
obtained a final data set with 23,435 records. Each record
consists of 7 quasi-identifier attributes ({\em e.g.}, patient's age, zip code, admission date, etc.)
plus one confidential attribute that specifies the amount charged
for the patient's stay in the hospital. The correlation
between the quasi-identifier attributes and the confidential one is just 0.129.

The run time of the three algorithms for the Patient Discharge data set is shown in
Figure~\ref{fig:time} as a function of the value of $t$ to be attained.
We set $k=2$ in order to give maximum freedom to the algorithms in 
adapting the microaggregation to the desired value of $t$ (again between 0.01 and 
the maximum upper bound of 0.25), and force them to create the greatest
number of clusters (which the is worst case from the run time perspective).

\begin{figure}[!t]
	\centering
	\includegraphics[width=8cm]{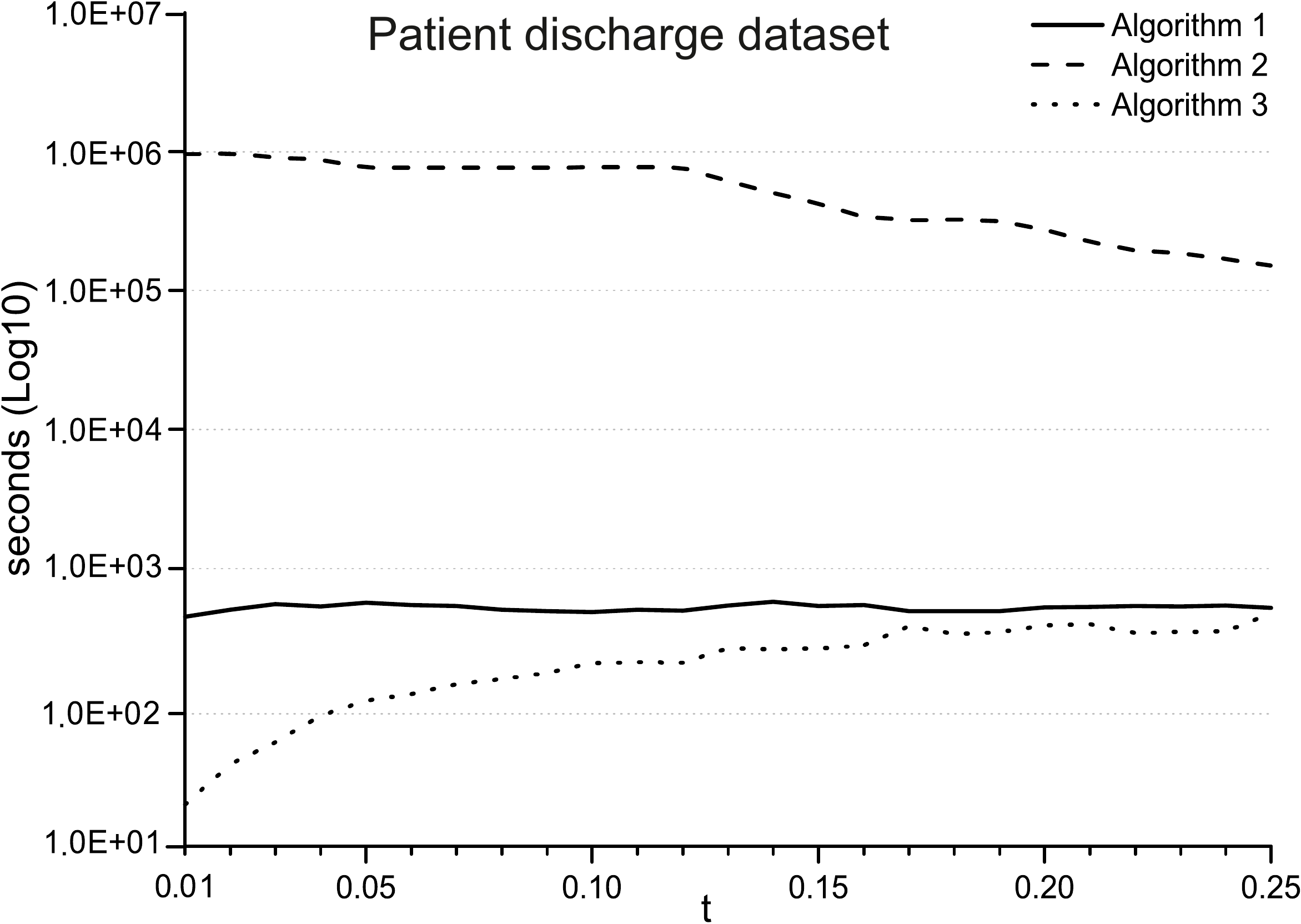}
	\protect\caption{Run time (in seconds with $log_{10}$ scale) for the three algorithms with $k=2$ 
and values of $t$	between 0.02 and 0.25 for the Patient Discharge data set\label{fig:time}}
\end{figure}

%David2: afegit 
Run time figures are coherent with the theoretical analysis of 
computational costs for the three algorithms. Algorithms ~\ref{alg:micro_merge} 
and ~\ref{alg:t-closeness-first} are significantly more efficient than Algorithm~\ref{alg:k-anonymity-first}
(note the logarithmic scale of the Y-axis), because the former have just 
the quadratic 
cost of the underlying microaggregation algorithm, 
whereas the latter has a cubic cost resulting from
the rearrangement of records required to fulfill $t$-closeness after the creation of each cluster.
Indeed, Algorithm~\ref{alg:k-anonymity-first} may 
%JOSEP1504. Canviada frase.
not scale well for large
data sets, whereas the other two algorithms scale 
as well as the underlying microaggregation.
%JOSEP1504. Reescrit 
At a closer look, Algorithm~\ref{alg:t-closeness-first} is 
significantly more efficient than Algorithm~\ref{alg:micro_merge} for low values of $t$.
%Jordi3. Canviat
%, because
%Algorithm~\ref{alg:t-closeness-first} knows a priori the number of clusters to be created
%(as a function of $k$ and $t$) and, for low values of $t$, this number is also low. 
The reason is that, although the cost of both algorithms is $\mathcal{O}(n^2/k)$,
%the actual value of $k$ is given to Algorithm~\ref{alg:micro_merge} while it is
%optimally computed by 
%JOSEP1504. Canviat
Algorithm~\ref{alg:t-closeness-first} optimally updates 
the value of $k$ in terms of the actual $t$:
for small values of $t$, the value of $k$ is large (see Equation~\ref{eq:cluster_size}),
which reduces the computational cost.
%JOSEP1504. Canviat.
In contrast, Algorithm~\ref{alg:micro_merge} only takes $t$
into account after the entire microaggregation has been performed. 
Finally, the run time of Algorithm~\ref{alg:k-anonymity-first} tends 
to decrease for large $t$ because, in this case, clusters are more
likely to (nearly) fulfill
$t$-closeness, thus requiring less rearrangement of records 
after each iteration.

%David2: canviada la mesura d'utilitat i afegits els resultats amb el dataset gran i mes ks.
\subsection{Data utility preservation}
\label{utility}

%JOSEP. Canviat argument.
%In order to directly compare the three algorithms, 
So far, the comparison between algorithms 
has been made only in terms of cluster sizes and run time.
Let us now examine to what extent each algorithm preserves 
the data utility for a certain privacy level.
Indeed, the different microaggregation strategies
and the actual levels of microaggregation achieved by the three algorithms
have a direct influence on the utility of the anonymized results.
In the literature, the utility of an anonymized output is evaluated in terms of \textit{information loss}, that is, 
the discrepancies between the original and the anonymized data set.  
The Sum of Squared Errors (SSE) is a well-known information loss measure, 
which is well-suited to capture the impact of creating
equivalence classes by means of $k$-anonymous microaggregation.
SSE is defined as 
the sum of squares of attribute distances between records
in the original data set $X$ and their versions in the anonymized data set.
%David2: modificat a la versio normalitzada
However, since SSE provides absolute error values, we 
%JOSEP1504. Canviat lleugerament
%used the average 
%version with a normalized distance that does not 
normalized it to obtain a measure that is independent of
the data set size 
(number of records and attributes) and the ranges of attribute values:

\begin{equation}
\label{SSE}
SSE = \frac{1}{n} \sum_{x_{j} \in X} \frac{1}{m} \sum_{a_{j}^{i} \in x_{j}} (NED(a_{j}^{i},(a_{j}^{i})'))^2 
\end{equation}

where $n$ is the number of records, $m$ is the number of attributes,
 $a_{j}^{i}$ is the value of the $i$-th attribute
for the $j$-th original record, $(a_{j}^{i})'$ represents its 
anonymized version and
$NED(\cdot,\cdot)$ corresponds to the Normalized Euclidean Distance.
Notice that with a high SSE, that is, a high information loss, 
a lot of data uses are severely damaged, like
for example subdomain analyses (analyses restricted to parts of the data set).   

%David2: Afegits resultats i discussio amb el nou data set de l'hospital
To fairly and clearly compare the three algorithms, we first took $k=2$ for $k$-anonymity
with $t$ values between 0.01 and 0.25 for $t$-closeness. 
In this manner, any actual cluster size $k > 1$ is feasible 
and the algorithms have the
greatest freedom to microaggregate records to fulfill the desired $t$-closeness.
SSE values for each value of $t$ are shown in Figure~\ref{fig:SSE} for the three data sets. 

\begin{figure}[!t]
\begin{centering}
	\includegraphics[width=8cm]{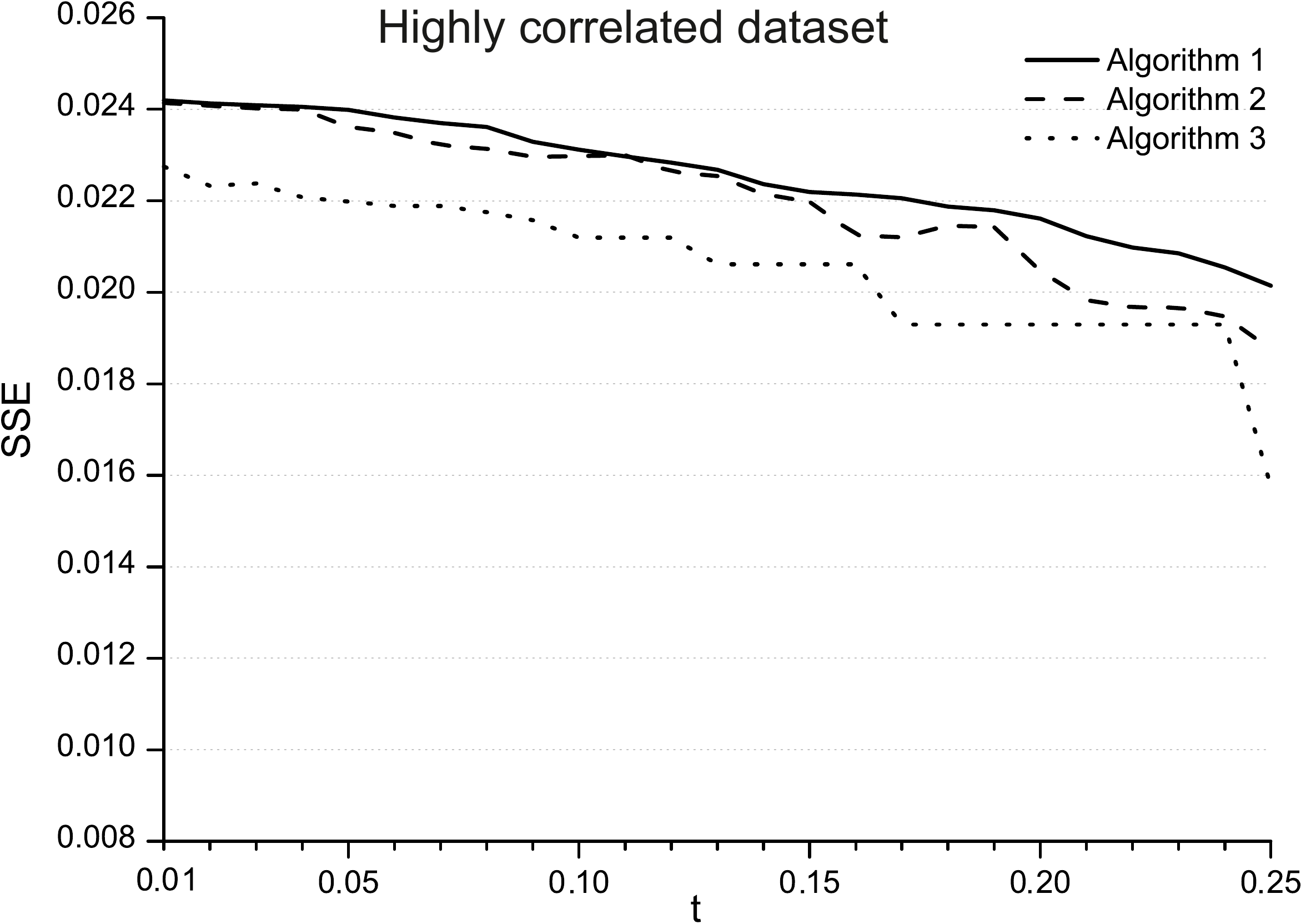}\\
	\vspace{0.2cm}	
\includegraphics[width=8cm]{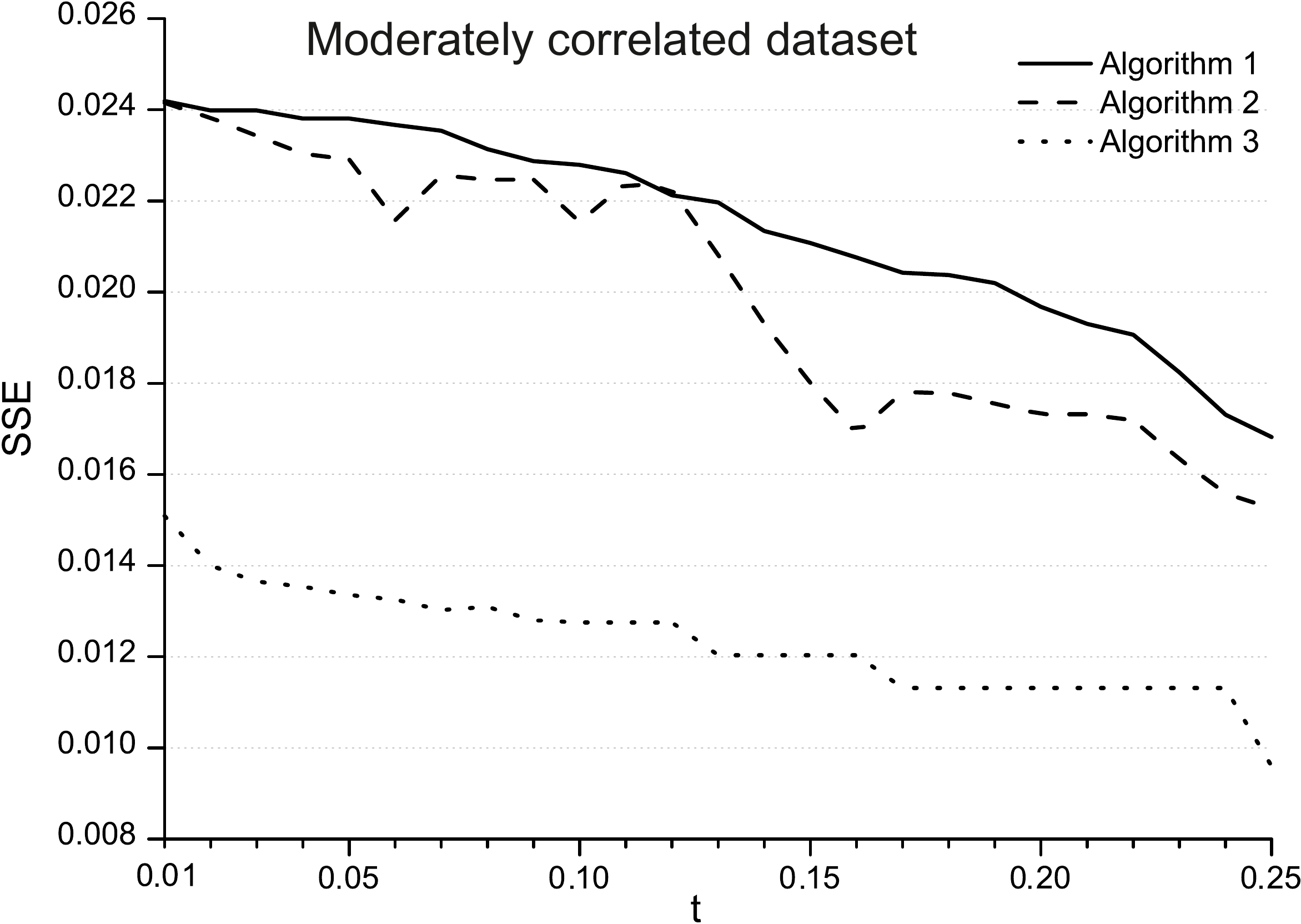}\\
	\vspace{0.2cm}
	\includegraphics[width=8cm]{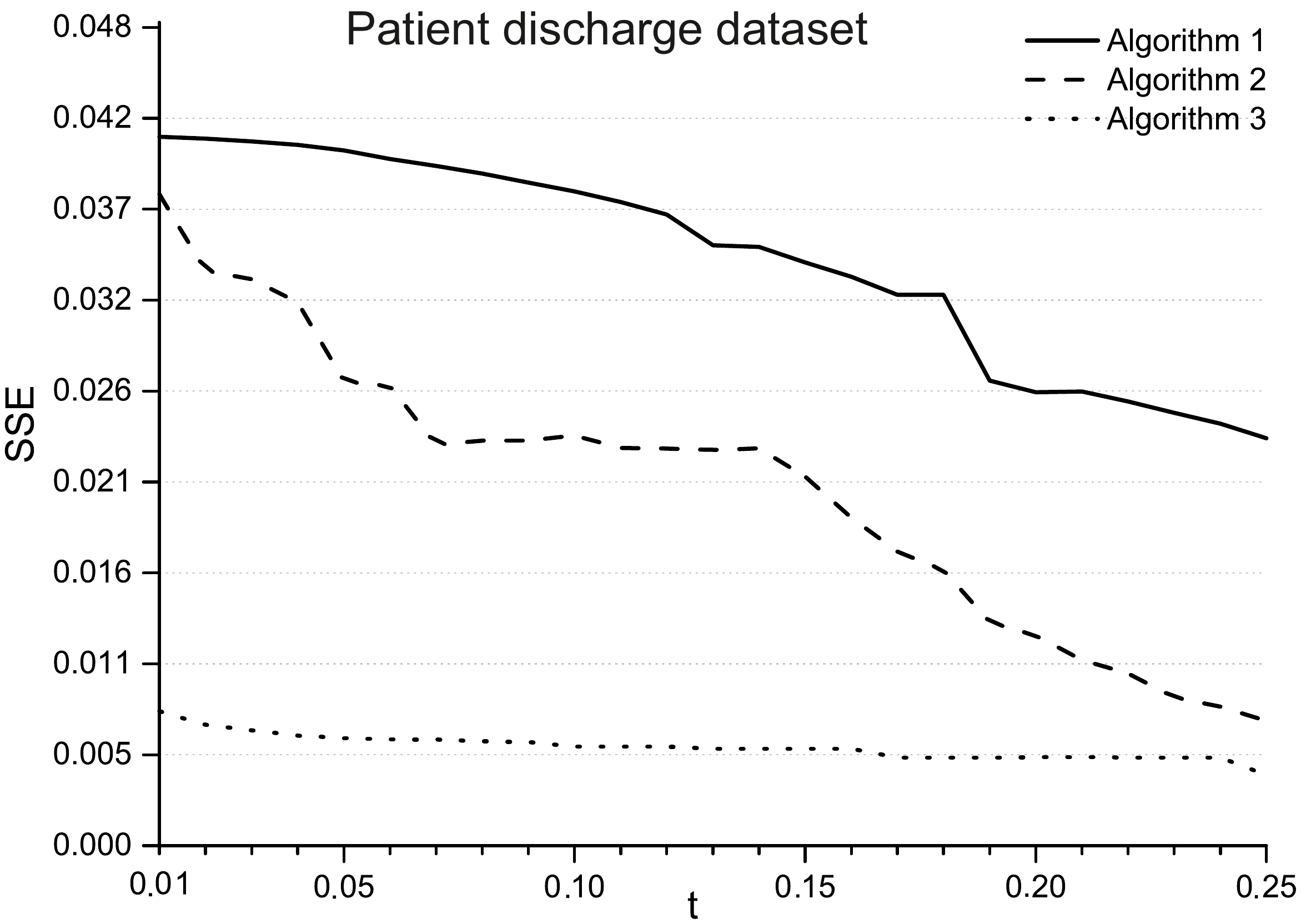}\\
	\vspace{0.2cm}
\par\end{centering}
%JOSEP1504. Canviat "Average SSE" per "Normalized SSE"
\protect\caption{Normalized SSE values for the three algorithms with $k=2$ 
and values of $t$ between 0.02 and 0.25 for the HCD (top), MCD (middle) and 
Patient Discharge (bottom) data sets\label{fig:SSE}}
	\end{figure}

%JOSEP. Canviada frase perquè era poc clara.
All graphs show that Algorithm~\ref{alg:k-anonymity-first} 
improves on Algorithm~\ref{alg:micro_merge} and, in turn, 
Algorithm~\ref{alg:t-closeness-first} improves 
on Algorithm~\ref{alg:k-anonymity-first}.
%JOSEP1504. Destacat amb cursiva.
Thus, we can see that {\em the earlier we consider
the fulfillment of $t$-closeness in the microaggregation step, the
more utility is preserved in the output}. This may seem paradoxical, 
because a $t$-closeness aware microaggregation that prioritizes 
the distribution of confidential values 
(such as the one in Algorithm~\ref{alg:t-closeness-first})
is likely to cluster records with heterogeneous quasi-identifier values, 
and thereby incur higher information loss. 
%JOSEP. Canviada frase.
Some of this is apparent in Figure~\ref{fig:SSE}: 
Algorithm~\ref{alg:t-closeness-first} improves much more on
the other two algorithms for the MCD and Patient Discharge than for the HCD data set,
because cluster homogeneity for HCD is harder to 
reconcile with the $t$-closeness requirement due to the higher
correlation of quasi-identifiers and the confidential attribute.
However, on the other hand, the fact that
the $k$-anonymous microaggregation is aware of the level of $t$-closeness that
should be satisfied also produces smaller clusters (of size closer 
to the desired $k$),
which is beneficial to keep SSE low.
In contrast, the other algorithms, and especially 
Algorithm~\ref{alg:micro_merge}, prioritize
quasi-identifier values in the $k$-anonymous microaggregation and, 
hence, they require a lot of cluster merging and/or manipulation 
to attain $t$-closeness.
%JOSEP. Canviada resta del paràgraf. 
This tends to produce larger clusters (as shown by the experiments
 on cluster sizes), 
whose aggregation incurs a greater loss of information, 
which is nonetheless fairly independent of the correlation
between quasi-identifiers and confidential attributes;
this is especially noticeable for the Patient Discharge data set,
in which Algorithm~\ref{alg:micro_merge} behaves significantly
worse than the other two.

%JOSEP. Reescrit.
%JOSEP1504. Destacat amb cursiva.
To sum up, {\em the increase of information loss that the lower cluster
homogeneity of $t$-closeness aware microaggregation might cause
is more than compensated by the information loss reduction 
resulting from smaller clusters}.

%David2: afegits resultats individuals 3D en funcio de k i t i discussio
%JOSEP1504. Canviat "Average SSE" per "Normalized SSE"
Finally, we also evaluated the evolution of the normalized SSE as a function of 
both $k$ and $t$. As a reference, Figure~\ref{fig:SSE-3D} shows 
this evolution for the three algorithms with the MCD data set.

\begin{figure}[!t]
\begin{centering}
	\includegraphics[width=8cm]{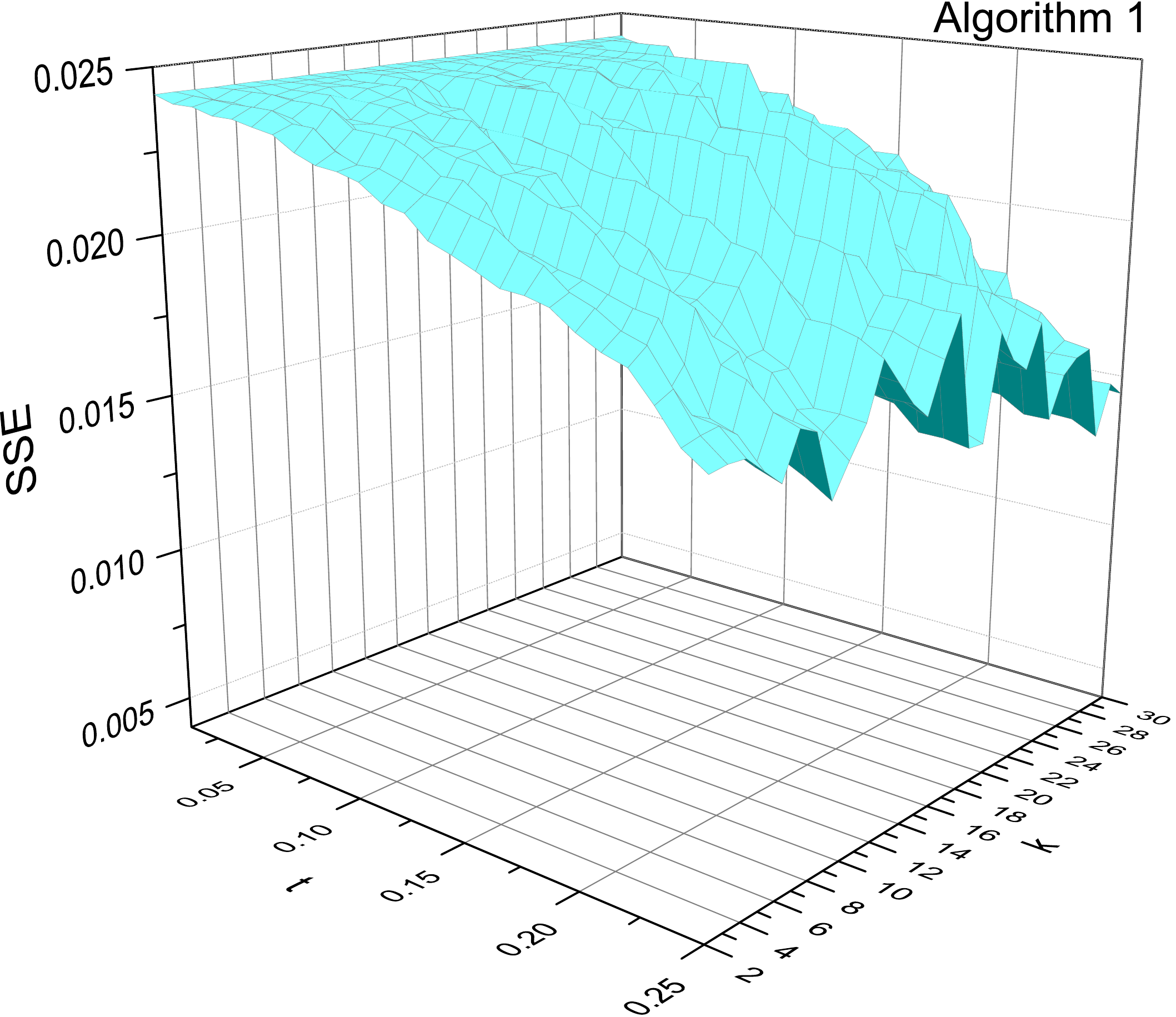}\\
	\vspace{0.4cm}	
\includegraphics[width=8cm]{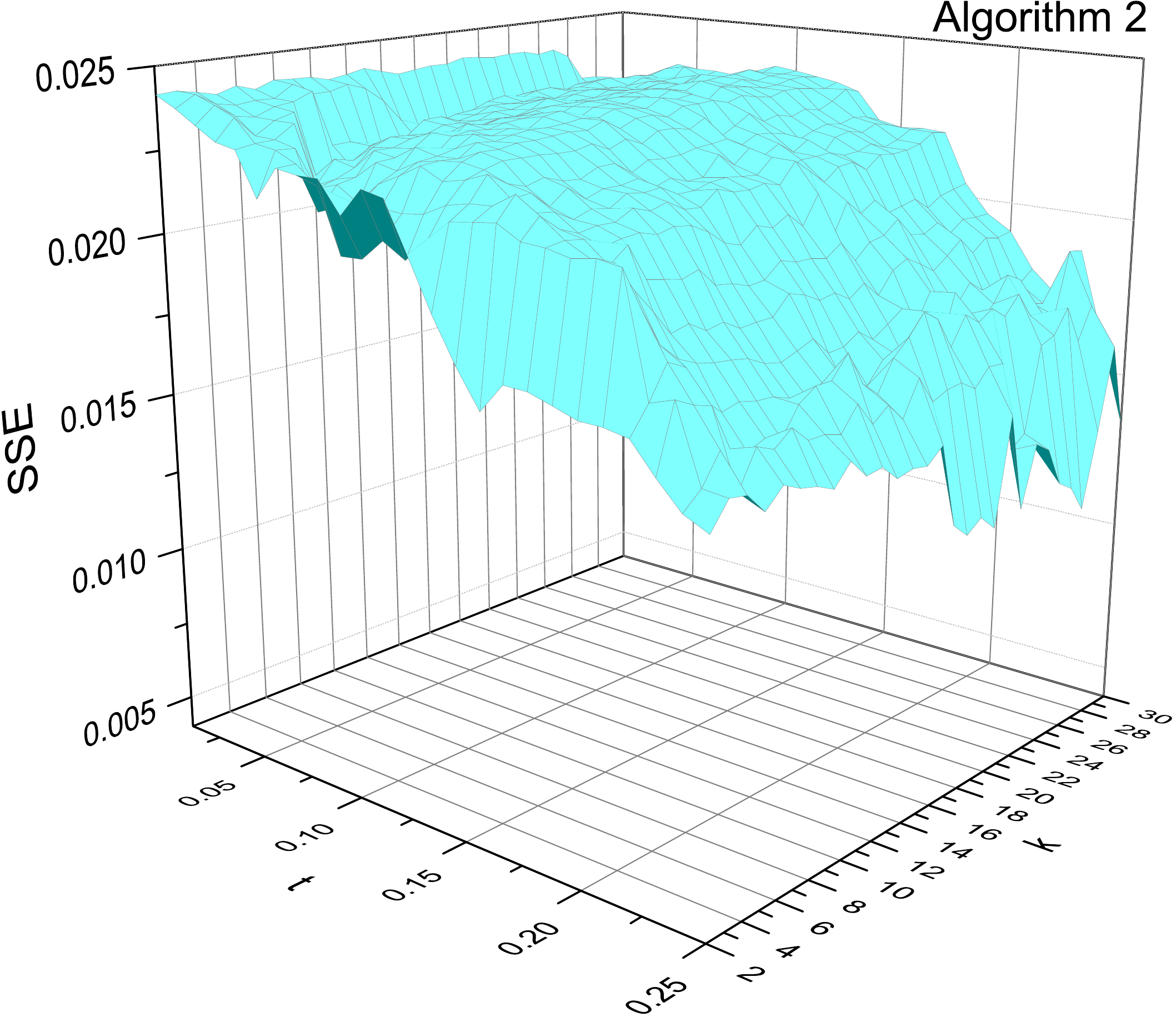}\\
	\vspace{0.4cm}
	\includegraphics[width=8cm]{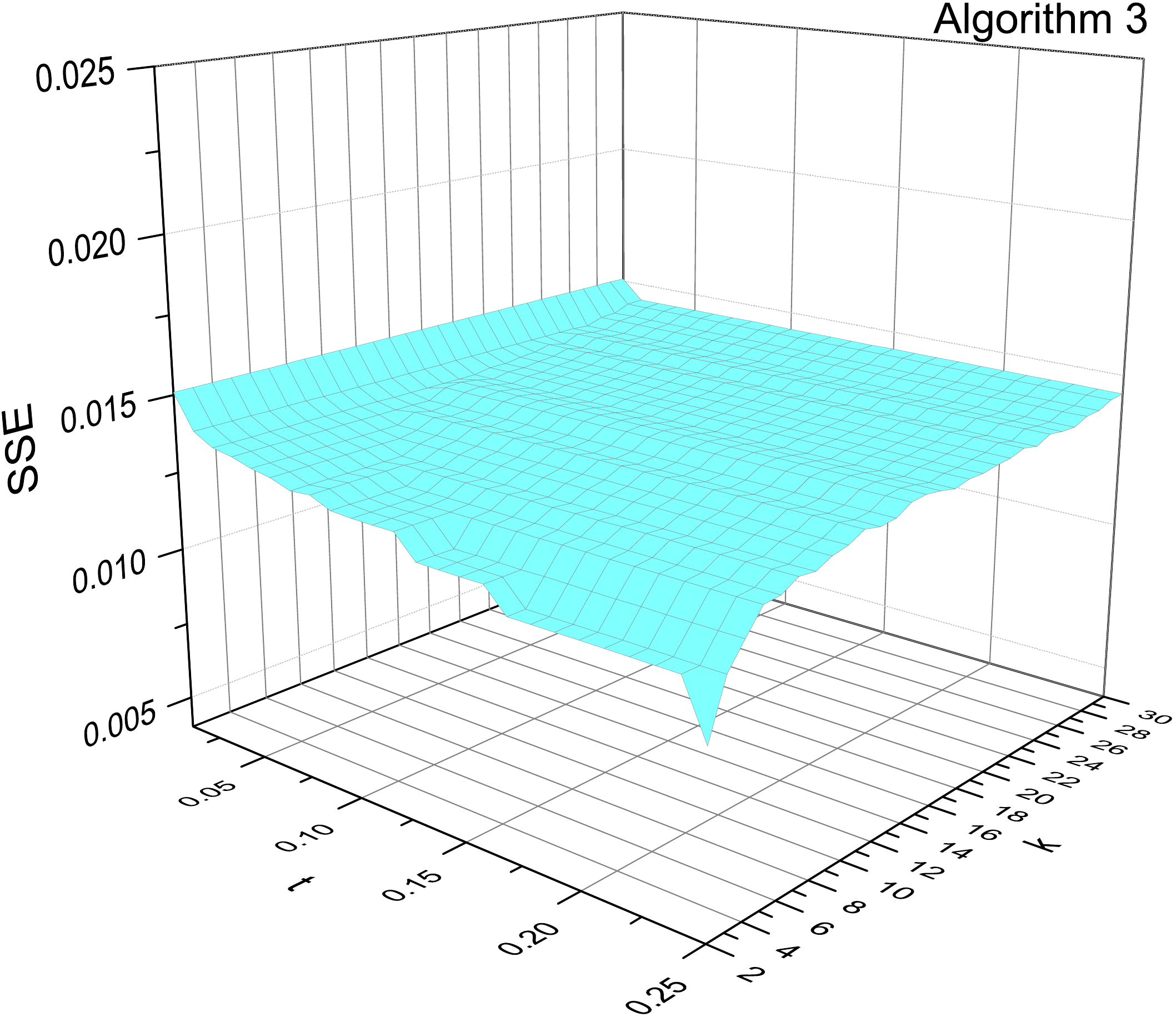}\\
	\vspace{0.4cm}
\par\end{centering}
%JOSEP1504. Canviat "Average SSE" per "Normalized SSE"
		\protect\caption{Normalized SSE for the three algorithms 
for $k$ between 2 and 30 
and $t$ between 0.02 and 0.25 for the MCD data set\label{fig:SSE-3D}}
	\end{figure}

First, we can see that some of the advantages of Algorithm~\ref{alg:t-closeness-first}
are diminished when a higher $k$ is required. As shown in 
Proposition~\ref{prop:upper_bound}, the actual cluster size will be
the maximum between the desired $k$ and the minimum size
required to fulfill $t$-closeness. Thus, because of the optimal
updating of $k$ by Algorithm~\ref{alg:t-closeness-first},
this algorithm is the one for which SSE increases the most as a 
result of the larger $k$.
Algorithms~\ref{alg:micro_merge} and~\ref{alg:k-anonymity-first},
on the other hand, are more immune to large values of $k$.
Indeed, since they prioritize the $k$-anonymous microaggregation,
the larger clusters obtained for large values of $k$
have a greater chance to already fulfill $t$-closeness
without the posterior merging step;
since $k$-anonymous clusters are created in order to 
minimize the SSE, the smaller number of merging steps required 
to fulfill $t$-closeness helps to maintain cluster homogeneity 
and avoid increasing SSE. In any case, for any value of $k$, the 
SSE for these 
algorithms is still higher than for Algorithm~\ref{alg:t-closeness-first}.

For Algorithms~\ref{alg:micro_merge} and~\ref{alg:k-anonymity-first},
it is also interesting to observe the
spikes that occur for certain values of $k$,
which are more noticeable for Algorithm~\ref{alg:micro_merge}.
Spikes occur when $k$ is not a divisor of the data set size $n$ (i.e., 1,080);
that is, when it is not possible to group all records in clusters
of size $k$.
In such cases, the microaggregation algorithm is forced to distribute
the remaining $r=n \bmod k$ records among already created clusters,
%JOSEP1504. Canviat.
which deteriorates cluster homogeneity and thus increases SSE.
On the contrary, Algorithm~\ref{alg:t-closeness-first} is more immune to this situation,
because clusters are created to satisfy $t$-closeness, rather than
to minimize the SSE.

\section{Conclusions and research directions}
\label{sec:conclusions}
%David: pendents. Basicament el que ja s'ha vist a l'avaluació: 
%que l'algorisme 3 es el millor gracies a que implementa una
%microaggregació optima en quant a cardinalitat dels clusters
%i que això, a la pràctica, té mes pes que la homogeneitat dels
%quasi-identificadors del registres agrupats
%Jordi: Afegit
We have proposed and evaluated the use of microaggregation as a 
method to attain $k$-anonymous $t$-closeness. 

%JOSEP. Reescrit.
The {\em a priori} benefits of microaggregation vs generalization/recoding and 
local suppression 
have been discussed. Global recoding may recode more than needed,
 whereas local recoding complicates data analysis by mixing
together values corresponding to different levels of generalization.
Also, recoding produces a greater loss of 
granularity of the data, is more
affected by outliers, and changes numerical values to ranges.
Regarding local suppression, it complicates data analysis with 
missing values and is not obvious to combine with recoding
in order to decrease the amount of generalization. 
Microaggregation is free from all the above downsides.  

We have proposed and evaluated three different microaggregation
based algorithms to generate $k$-anonymous $t$-close data sets. 
The first one is a simple merging step that can be run after 
any microaggregation algorithm. The other
two algorithms, $k$-anonymity-first and $t$-closeness-first, take
the $t$-closeness requirement into account at the moment of   
cluster formation during microaggregation. 
The $t$-closeness-first algorithm considers $t$-closeness
earliest and provides
the best results: smallest average cluster size, smallest SSE for a given level 
of $t$-closeness, and shortest run time 
(because the actual microaggregation level is computed
beforehand according to the values of $k$ and $t$). 
%JOSEP1504. Escurçat paràgraf i afegida frase.
Thus, {\em considering the
$t$-closeness requirement from the very beginning turns out to be 
the best option}. 

%JOSEP. Afegit.
Since connections have been demonstrated between 
$t$-closeness and $\varepsilon$-differential privacy
of data sets~\cite{Soria2013differential,DomingoSoria15}, 
exploring how microaggregation could be leveraged
to implement the latter model in the case of data releases
is a natural continuation of this work. 
%David2: afegit future work
Moreover, we will also study the adaptation of the algorithms
to support categorical data by: i) defining an EMD suitable to
compare categorical values of different nature ({\em e.g.}, ordinal values 
such as colors, which can be sorted within a range, 
or nominal values such as jobs, hobbies, diagnoses, etc., 
which require interpreting their underlying semantics), 
ii) defining aggregation operators to compute cluster centroids 
({\em i.e.}, the categorical value that minimizes the distance to 
other values in the same cluster), and iii) properly managing records 
with numerical and categorical attributes in an integrated manner. 

%David: afegits
%JOSEP. Retocats.
\section*{Acknowledgments and disclaimer}

This work was partly supported by the European Commission 
(through projects FP7 "DwB", FP7 "Inter-Trust" and H2020 "CLARUS"), 
by the Spanish Government (through projects "ICWT" TIN2012-32757, 
"CO-PRIVACY" TIN2011-27076-C03-01 and "BallotNext" IPT-2012-0603-430000) 
and by the Government of Catalonia (under grant 2014 SGR 537). 
Josep Domingo-Ferrer is partially supported as an ICREA-Acad\`emia 
researcher by the Government of Catalonia and 
by a Google Faculty Research Award. 
Partial support by the  
Templeton World Charity Foundation is 
also acknowledged ("CO-UTILITY" grant).
The opinions expressed in this paper 
are the authors' own and do not necessarily reflect the views of the 
Templeton World Charity Foundation or UNESCO.

% Can use something like this to put references on a page
% by themselves when using endfloat and the captionsoff option.
\ifCLASSOPTIONcaptionsoff
  \newpage
\fi

% trigger a \newpage just before the given reference
% number - used to balance the columns on the last page
% adjust value as needed - may need to be readjusted if
% the document is modified later
%\IEEEtriggeratref{8}
% The "triggered" command can be changed if desired:
%\IEEEtriggercmd{\enlargethispage{-5in}}

% references section

% can use a bibliography generated by BibTeX as a .bbl file
% BibTeX documentation can be easily obtained at:
% http://www.ctan.org/tex-archive/biblio/bibtex/contrib/doc/
% The IEEEtran BibTeX style support page is at:
% http://www.michaelshell.org/tex/ieeetran/bibtex/
%\bibliographystyle{IEEEtran}
% argument is your BibTeX string definitions and bibliography database(s)
%\bibliography{IEEEabrv,../bib/paper}
%
% <OR> manually copy in the resultant .bbl file
% set second argument of \begin to the number of references
% (used to reserve space for the reference number labels box)

% biography section
% 
% If you have an EPS/PDF photo (graphicx package needed) extra braces are
% needed around the contents of the optional argument to biography to prevent
% the LaTeX parser from getting confused when it sees the complicated
% \includegraphics command within an optional argument. (You could create
% your own custom macro containing the \includegraphics command to make things
% simpler here.)
%\begin{IEEEbiography}[{\includegraphics[width=1in,height=1.25in,clip,keepaspectratio]{mshell}}]{Michael Shell}
% or if you just want to reserve a space for a photo:

%\begin{IEEEbiography}{Michael Shell}
%Biography text here.
%\end{IEEEbiography}

% if you will not have a photo at all:
%\newpage
\begin{IEEEbiography}[{\includegraphics[width=1in,height=1.25in,clip]{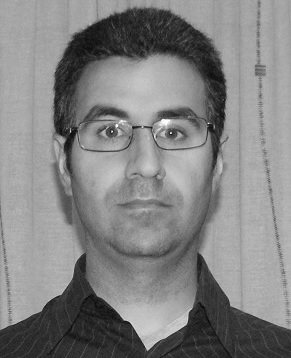}}]{Jordi Soria-Comas}
is a postdoctoral researcher at Universitat Rovira i Virgili. He has received his M. Sc. in Computer Security (2011) and Ph. D. in Computer Science (2013) degrees from the Universitat Rovira i Virgili. He also holds a M. Sc. in Finance from the Autonomous University of Barcelona (2004) and a B.Sc. in Mathematics from the University of Barcelona (2003). His research interests are in data privacy and security.
\end{IEEEbiography}

\begin{IEEEbiography}[{\includegraphics[width=1in,height=1.25in,clip,keepaspectratio]{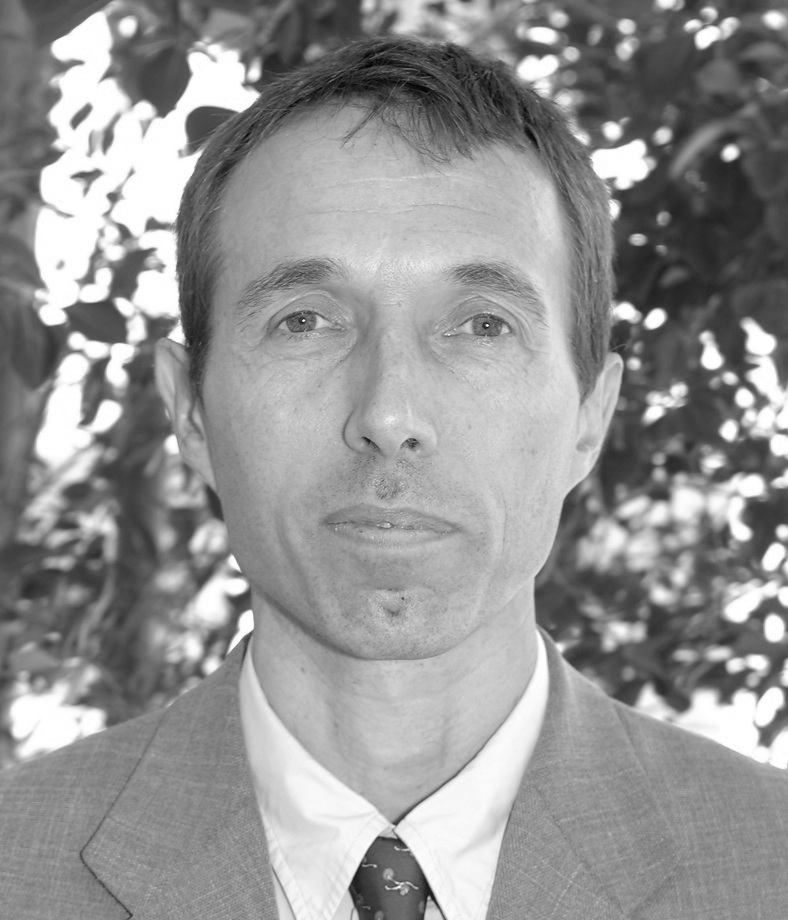}}]{Josep Domingo-Ferrer}
(Fellow, IEEE) 
is a Distinguished Professor of Computer Science and
an ICREA-Acad\`emia Researcher at Universitat Rovira i Virgili,
Tarragona, Catalonia, where he holds the UNESCO Chair in Data Privacy.
He received the MSc and
PhD degrees in Computer Science from
the Autonomous University of Barcelona in 1988 and
1991, respectively. He also holds an MSc degree in
Mathematics. 
His research interests are in data privacy, data security and cryptographic
protocols. More information on him can be found 
at \url{http://crises-deim.urv.cat/jdomingo}
\end{IEEEbiography}

\begin{IEEEbiography}[{\includegraphics[width=1in,height=1.25in,clip,keepaspectratio]{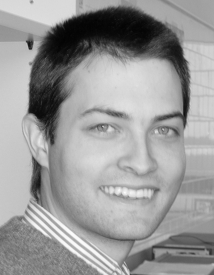}}]{David S\'anchez}
is an Associate Professor of Computer Science at Universitat Rovira i Virgili, Tarragona,
Catalonia. His research interests are in data semantics and data privacy. He received a PhD in Computer
Science from the Technical University of Catalonia. Contact him at david.sanchez@urv.cat.
\end{IEEEbiography}

\begin{IEEEbiography}[{\includegraphics[width=1in,height=1.25in,clip,keepaspectratio]{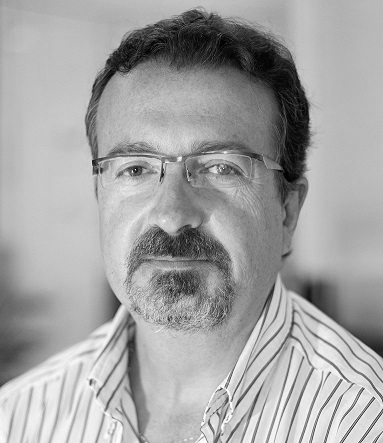}}]{Sergio Mart\'{\i}nez}
is a post-doctoral researcher at University Rovira i Virgili (URV) in Tarragona. He received an MSc in Intelligent Systems (2010) and a Ph.D in Computer 
Science (2013), both awarded by the URV. His research interests are 
in artificial intelligence, semantic similarity and privacy preservation. 
He has participated in European and Spanish research projects.
\end{IEEEbiography}

% insert where needed to balance the two columns on the last page with
% biographies
%\newpage

%\begin{IEEEbiographynophoto}{Jane Doe}
%Biography text here.
%\end{IEEEbiographynophoto}

% You can push biographies down or up by placing
% a \vfill before or after them. The appropriate
% use of \vfill depends on what kind of text is
% on the last page and whether or not the columns
% are being equalized.

\vfill

% Can be used to pull up biographies so that the bottom of the last one
% is flush with the other column.
%\enlargethispage{-5in}

% that's all folks
\end{document}